\documentclass[12pt, draftclsnofoot, onecolumn]{IEEEtran}
%\documentclass[journal,transmag]{IEEEtran}

% correct bad hyphenation here
\hyphenation{op-tical net-works semi-conduc-tor}

\usepackage{subfigure}
\usepackage{setspace}
\usepackage{fancyhdr}
\usepackage{graphicx}
\usepackage{subfigure}
\usepackage{color}
\usepackage{placeins}
\usepackage{float}
\usepackage{flushend}
\usepackage{tabularx,colortbl}
\usepackage{amsmath}
\usepackage{amsfonts}
\usepackage{amssymb}
\usepackage{amsthm}
\usepackage{bm}
\usepackage[numbers,sort&compress]{natbib}
\usepackage{geometry} \geometry{left=2.6cm,right=2.6cm,top=2.8cm,bottom=2.8cm}

\begin{document}
%
% paper title
% can use linebreaks \\ within to get better formatting as desired
\title{On Uplink Performance of Multiuser Massive MIMO Relay Network With Limited RF Chains}
\author{\small\IEEEauthorblockN{Jindan~Xu\IEEEauthorrefmark{0},~\emph{Student Member,~IEEE}, Yucheng~Wang\IEEEauthorrefmark{0},~\emph{Student Member,~IEEE}, Wei~Xu\IEEEauthorrefmark{0},~\emph{Senior Member,~IEEE}, Shi~Jin\IEEEauthorrefmark{0},~\emph{Senior Member,~IEEE}, Hong~Shen\IEEEauthorrefmark{0},~\emph{Member,~IEEE}, and Xiaohu~You\IEEEauthorrefmark{0},~\emph{Fellow,~IEEE}}\\
\IEEEauthorblockA{
National Mobile Communications Research Laboratory, Southeast University, Nanjing 210096, China\\
Email: \{jdxu, yc.wang, wxu, jinshi, shhseu, xhyu\}@seu.edu.cn}
\vspace{-0.8cm}
\thanks{
Part of this work was presented at the WCSP 2017 in Nanjing, China \cite{wang2017on}.
}}

% use for special paper notices
%\IEEEspecialpapernotice{(Invited Paper)}

% make the title area
\maketitle
\thispagestyle{fancy}
\renewcommand{\headrulewidth}{0pt}
\pagestyle{fancy}
\cfoot{}
\rhead{\thepage}
\newtheorem{mylemma}{Lemma}
\newtheorem{mytheorem}{Theorem}
\newtheorem{mypro}{Proposition}
\newtheorem{mycor}{Corollary}

\begin{abstract}
%\boldmath
This paper considers a multiuser massive multiple-input multiple-output uplink with the help of an analog amplify-and-forward relay. The base station equips a large array of $N_d$ antennas but is supported by a far smaller number of radio-frequency chains. By first deriving new results for a cascaded phase-aligned two-hop channel, we obtain a tight bound for the ergodic rate in closed form for both perfect and quantized channel phase information. The rate is characterized as a function of a scaled equivalent signal-to-noise ratio of the two-hop channel. It implies that the source and relay powers can be respectively scaled down as $1/N_d^a$ and $1/N_d^{1-a}~ (0\!\leq\!a\!\leq\!1)$ for an asymptotically unchanged sum rate. Then for the rate maximization, the problem of power allocation is optimized with closed-form solutions. Simulation results verified the observations of our derived results.
\end{abstract}
% IEEEtran.cls defaults to using nonbold math in the Abstract.
% This preserves the distinction between vectors and scalars. However,
% if the conference you are submitting to favors bold math in the abstract,
% then you can use LaTeX's standard command \boldmath at the very start
% of the abstract to achieve this. Many IEEE journals/conferences frown on
% math in the abstract anyway.

% keywords

\begin{IEEEkeywords}

Massive MIMO, limited RF chains, amplify and forward, hybrid processing
\end{IEEEkeywords}

% For peer review papers, you can put extra information on the cover
% page as needed:
% \ifCLASSOPTIONpeerreview
% \begin{center} \bfseries EDICS Category: 3-BBND \end{center}
% \fi
%
% For peerreview papers, this IEEEtran command inserts a page break and
% creates the second title. It will be ignored for other modes.
\IEEEpeerreviewmaketitle

\section{Introduction}

Multiuser multiple-input multiple-output (MU-MIMO) refers to a system in which a base station (BS) exploits multiple antennas to simultaneously serve many terminals \cite{caire2003on}-\cite{ZZhang2015MIMO}.
Non-orthogonal multiple access (NOMA) holds great promise in carrying massive connectivity in MU-MIMO systems \cite{QYu2019MUMIMO}.
In \cite{ZShi2019AI}, an artificial intelligence (AI) based cooperative spectrum sensing framework was studied for NOMA to improve the spectral efficiency.
In recent years, MU-MIMO has attracted significant interest especially for large-scale antenna arrays, namely massive MIMO \cite{marzetta2010noncooperative}-\cite{xie2016a}. In massive MIMO, small-scale fading is averaged out and the transmit power of each antenna can be aggressively scaled down, leading to significantly improved spectral and energy efficiencies \cite{J_Xu_1}, \cite{ngo2013energy}.

%Precoding is one of the key techniques for achieving the performance in MU-MIMO downlink \cite{jose2011pilot}. Nonlinear precoding such as dirty paper coding (DPC) is known to be capacity-achieving \cite{jindal2005dirty}. However, it is rather complex to implement in practice even in a conventional small-scale MIMO setup. Alternatively, linear precoders such as zero-forcing (ZF) precoding can be adopted to asymptotically approach the theoretical benchmark performance of the optimal nonlinear precoding \cite{rusek2013scaling}.

To enhance the coverage of a MU-MIMO, relay has been introduced for the scenarios where direct link between source and destination is weak due to heavy pathloss and shadowing. In \cite{jin2010ergodic}, the capacity of a MU-MIMO relay system was studied, while the degree of freedom of the system was analyzed in \cite{tian2014degree}. Specific transmission designs for the MIMO relay was optimized in \cite{gao2009optical}. Further incorporating the idea of massive MIMO, studies \cite{ngo2014multipair} and \cite{Suraweera2013multi} analyzed the performance of a massive MIMO relay system serving multiple users.
Then in \cite{jin2015ergodic}, the analysis was further explored for a multi-pair two-way amplify and forward (AF) relay massive MIMO system.
For multi-hop communication systems using multiple frequency bands, an algorithm was proposed in \cite{ZFadlullahg2019Multihop} for the relay to choose the optimal modulation method and coding rate.

Precoding is one of the key techniques for achieving the performance in MU-MIMO downlink \cite{jose2011pilot}. Nonlinear precoding such as dirty paper coding (DPC) is known to be capacity-achieving \cite{jindal2005dirty}. However, it is rather complex to implement in practice even in a conventional small-scale MIMO setup. Alternatively, linear precoders such as zero-forcing (ZF) precoding can be adopted to asymptotically approach the theoretical benchmark performance of the optimal nonlinear precoding \cite{rusek2013scaling}.
In the aforementioned works, conventional fully digital signal processing techniques are adopted where each antenna requires a dedicated radio-frequency (RF) chain.
This is in general high-cost especially for the massive MIMO with a large number of antennas. To reduce the hardware and power consumptions, we may resort to constraining the number of RF chains, resulting in a hybrid transceiver architecture \cite{alkhateeb2014channel}, \cite{yu2016hybrid}.

A hybrid precoding scheme, consisting of a digital precoder in baseband and an analog precoder equipped with phase shifters in RF \cite{liang2014low-complexity}, was proven to approach the benchmark performance achieved by conventional fully digital precoding techniques.
In \cite{dai2015near}, a near-optimal iterative hybrid precoding scheme was proposed based on a low-cost sub-array structure.
Then, a successive interference cancelation (SIC)-based hybrid precoder was further studied in \cite{gao2016energy} with reduced computational complexity.
A deep-learning-enabled hybrid precoder was proposed in \cite{HHuang2019learning} for massive MIMO framework.
In \cite{fozooni2016massive}, spectral efficiency was characterized for a multi-pair relay network with a hybrid transceiver. This analysis was then extended in \cite{xu2017spectral} for a multi-pair massive MIMO two-way relay network.

In most of the existing studies on massive MIMO networks using hybrid transceivers, i.e., \cite{liang2014low-complexity}-\cite{xu2017spectral}, the analog processing matrix was designed by extracting the phases of the corresponding single-hop channel.
Few works has investigated the cascaded two-hop relay channels where the hybrid analog processing matrix is designed according to the equivalent cascaded channel, which can be more practical in applications.
In this paper, we study a massive MIMO uplink network assisted with a low-complexity analog relay \cite{borade2007Amplify}-\cite{Park2017Amplify}. Hybrid transceiver architecture is adopted at the BS with limited RF chains. We investigate the performance of the system, especially revealing the effect of limited RF chains on the achievable rate. The derived result quantitatively characterizes the tradeoff between performance and hardware cost. The main contributions of this paper are summarized as follows.
\begin{itemize}
\item The phase of each element of the hybrid analog detecting matrix is element-wisely chosen as that of the cascade two-hop channel matrix. This operation generates an equivalent random channel matrix that does not follow any typical multi-variate distribution as we know. We derive new results on statistics of the phase-aligned cascaded two-hop Gaussian channel matrices, which is shown essential in conducting the performance analysis of the massive relay network with hybrid processing.
\item A tight bound for the achievable rate in massive MIMO relay uplink is obtained in closed form. Further for low signal-to-noise ratios (SNRs) where energy efficiency matters, the ergodic rate of the $k$th user is asymptotically expressed as
    $R_k^{low} = \frac{1}{2}\log_2\left(1 + \frac{\pi}{4} N_d\chi_r \chi_d\right)$
    where $N_d$ is the number of antennas at BS while $\chi_r$ and $\chi_d$ are equivalent SNRs at the relay and BS, respectively. Apparently, there is a decaying factor of $\frac{\pi}{4}$ on the ergodic rate compared to the fully digital scheme.
\item Power scaling laws are obtained to reveal the ability of simultaneously reducing the power consumption of the users and relay as $N_d$ tends to infinity. The asymptotic rate remains constant if the transmit powers of users and relay scale down by $1/N_d^a$ and $1/N_d^{1-a}$ ($0\leq a\leq1$), respectively. Comparison with the full-RF-chain structure verifies that the hybrid detection performs rather close to the fully digital ZF detection in massive MIMO.
\item The optimal power allocation (PA) problem for sum rate maximization of the system is rather intractable due to the nonlinear relationship between the sum rate and PA factors. We impose an auxiliary constraint to the original optimization problem, and transform it into an equivalent one which allows the optimal PA factors to be obtained in closed form through Karush-Kuhn-Tucker (KKT) analysis.
\end{itemize}

The rest of the paper is organized as follows. Section II introduces the system model.
New preliminaries are derived in Section III to assist performance analysis.
In Section IV, asymptotical achievable rate is derived and power saving scenarios are elaborated. Section V deals with the optimal PA problem for sum rate maximization. Numerical results and conclusions are presented in Section VI and VII, respectively.

\emph{Notations}: $(\cdot)^H$ and $(\cdot)^T$ represent the Hermitian and transpose of a matrix, respectively. $(\cdot)^{-1}$ and Tr$(\cdot)$ represent the inverse and trace of a square matrix, respectively. $\mathbb{E}\left\{\right\}$ takes expectation. $\mathbf{I}_K$ denotes the identity matrix of size $K\times K$ while ${\rm diag}(a_1, \cdots, a_n)$ returns a diagonal matrix containing $\{a_1, \cdots, a_n\}$ on the diagonal. $[\cdot]_{ij}$ denotes the $(i,j)$th element of a matrix. $|\cdot|$ and $\|\cdot\|$ take the norm of a complex number and a vector, respectively.
$\angle$ returns the phase of a complex value.
$\mathcal{U}[a,b]$ is the uniform distribution between $a$ and $b$.

\section{System Model}
\begin{figure*}
\centering
\includegraphics[width=6.4in]{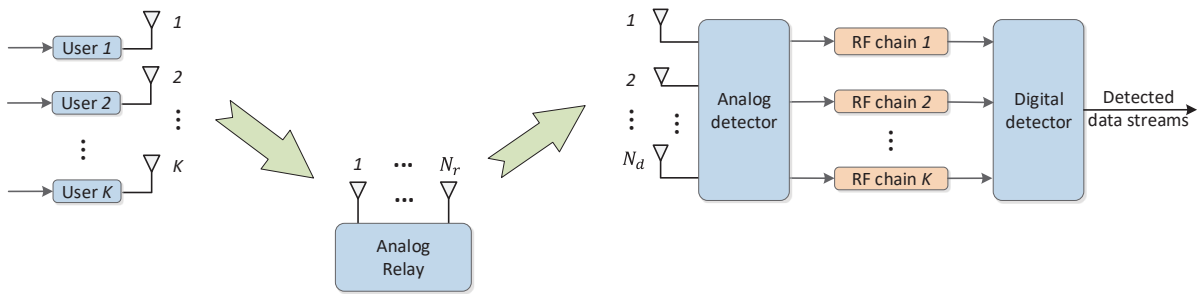}
\caption{Block diagram of a multiuser massive MIMO relaying network with limited RF chains.}
\label{Fig:block}
\end{figure*}

In this paper, we investigate a multiuser massive MIMO relaying uplink where $K$ active users transmit signals to an $N_d$-antenna BS with the help of an $N_r$-antenna analog AF relay, as illustrated in Fig. \ref{Fig:block}. Assume that the BS is equipped with a massive antenna array but driven by a far smaller number, $N_{\rm RF}$, of RF chains.

The channel between $K$ users and the relay can be represented as $\mathbf{H}=[\mathbf{h}_1~\mathbf{h}_2\cdots\mathbf{h}_K]$, where $\mathbf{h}_j\sim \mathcal{CN}(\mathbf{0}_{N_r},\xi_j\mathbf{I}_{N_r})$ is the channel between the $j$th user and the relay and $\xi_j$ denotes the path loss.
The channel between the relay and BS is denoted by $\mathbf{G}\in\mathbb{C}^{N_d \times N_r}$ whose entries are independent and identically distributed (i.i.d.) as $\mathcal{CN}(0,\eta)$. Note that the number of users, $K$, can be generally an arbitrary value while $N_{\rm RF}$ is fixed in a certain application. If $K>N_{\rm RF}$, the system usually schedules a subset with $N_{\rm RF}$ users for simultaneous transmission because the maximum number of independent data streams that can be supported is in theory $N_{\rm RF}$ \cite{Sohrabi2016hybrid}. If $K \leq N_{\rm RF}$, a direct but effective way is to choose only $K$ of the $N_{\rm RF}$ chains for serving the users.
%Although using all the $N_{\rm RF}$ chains achieves performance gain by exploiting the additional $(N_{\rm RF}-K)$ degrees of freedom, the performance gain is marginal because a comparably significant array gain has been reaped through analog processing in the context of massive MIMO.
%On the other hand, turning on a smaller number of RF chains reduces power consumption as well as signal processing complexity, which is regarded as an important issue in massive MIMO implementations \cite{alkhateeb2014channel}-\cite{yu2016hybrid}.
Upon these considerations, we therefore conduct the following discussion and analysis by considering $K = N_{\rm RF}$ without loss of generality.
Then in the following, we describe transmission model in Fig.~\ref{Fig:block} in three steps.

1) $K$ users simultaneously transmit signals to the relay. The receive signal at the relay is
\begin{equation}\label{eq:relay_rec}
\mathbf{y}_r=\sqrt{P_u}\mathbf{H}\mathbf{x}_u + \mathbf{n}_r,
\end{equation}
where $P_u$ is the transmit power of each user, $\mathbf{x}_u = [x_1,\cdots, x_K]^T$ is the transmit signal vector satisfying $\mathbb{E}\{\mathbf{x}_u\mathbf{x}_u^H\}=\mathbf{I}_K$, and $\mathbf{n}_r$ is the Gaussian noise vector satisfying $\mathbb{E}\{\mathbf{n}_r\mathbf{n}_r^H\}=\sigma_r^2\mathbf{I}_{N_r}$.

2) The relay amplifies and forwards the receive signals to BS. The receive signal at BS is
\begin{equation}\label{eq:des_rec}
\mathbf{y}_d=\alpha\mathbf{G}\mathbf{y}_r + \mathbf{n}_d,
\end{equation}
where $\alpha$ is an amplification factor to guarantee the power constrain at the relay, and  $\mathbf{n}_d$ is the Gaussian noise vector satisfying $\mathbb{E}\{\mathbf{n}_d\mathbf{n}_d^H\}=\sigma_d^2\mathbf{I}_{N_d}$. Denoting by $P_r$ the transmit power at the relay, the amplification factor is
\begin{equation}\label{eq:alpha}
\alpha=\sqrt{\frac{P_r}{P_u \text{Tr}(\mathbf{H}^H\mathbf{H})+\sigma_r^2 N_r}}.
\end{equation}

3) The hybrid detection at BS can be composed of an analog RF detector, $\mathbf{W}_a \in \mathbb{C}^{K\times N_d}$, and a subsequent digital baseband detector, $\mathbf{W}_d \in \mathbb{C}^{K\times K}$. Specifically, $\mathbf{W}_a$ exploits phase shifts to adjust the phases of receive signals while $\mathbf{W}_d$ makes adjustments to both signal amplitudes and phases. After the hybrid detection, we have
\begin{equation}\label{eq:est_transmit}
\hat{\mathbf{x}}=\mathbf{W}_d\mathbf{W}_a\mathbf{y}_d.
\end{equation}
Note that there have already existed amounts of design methods for the hybrid processing matrices. We follow a tractable and effective design philosophy, like in \cite{liang2014low-complexity}, where the analog detector is designed based on the cascade channel from users upto the BS, i.e., $\mathbf{GH}$ as a single entity, for several considerations. Firstly, estimating $\mathbf{G}$ and $\mathbf{H}$ is resource consuming and challenging especially at the relay. The channel, $\mathbf{G}\in\mathbb{C}^{N_d\times N_r}$, is even harder to obtain since it is a matrix with a very large number of elements in the massive MIMO. Secondly, the setup with an analog relay as in \cite{borade2007Amplify}-\cite{Park2017Amplify} is more implementable, especially for the massive MIMO relay network. We do not require individual estimates of $\mathbf{G}$ and $\mathbf{H}$ separately, or their statistics. By treating the cascade channel matrix $\mathbf{GH}$ as a single entity, we can apply an uplink channel training for equivalent channel estimation.
With this design, we choose $\mathbf{W}_a$ by extracting the phases of $(\mathbf{GH})^H$, i.e.,
\begin{equation}\label{eq:Fa}
[\mathbf{W}_a]_{ij}=\frac{1}{\sqrt{N_d}}e^{j\phi_{ij}},
\end{equation}
where $\phi_{ij}=\angle [(\mathbf{GH})^H]_{ij}$. While for digital detection, $\mathbf{W}_d$ is designed as a commonly used ZF detector according to the equivalent channel $\mathbf{W}_a\mathbf{GH}$. It follows
\begin{equation}\label{eq:dig_detector}
\mathbf{W}_d=\left(\mathbf{W}_a\mathbf{GH}\right)^{-1}.
\end{equation}
By substituting \eqref{eq:relay_rec}, \eqref{eq:des_rec} and \eqref{eq:dig_detector} into \eqref{eq:est_transmit}, we can rewrite the detected signal as
\begin{equation}\label{eq:est_signal}
\hat{\mathbf{x}}=\alpha\sqrt{P_u}\mathbf{x}_u + \alpha\mathbf{W}_d\mathbf{W}_a\mathbf{G}\mathbf{n}_r + \mathbf{W}_d\mathbf{W}_a\mathbf{n}_d.
\end{equation}
From \eqref{eq:est_signal}, without loss of generality, the received signal-to-interference-plus-noise ratio (SINR) of user $k$ is
%\begin{small}
\begin{align}\label{eq:SINR1}
\gamma_k
\!=\!\frac{P_u}{\sigma_r^2 [\!\mathbf{W}_d\mathbf{W}_a\mathbf{G}\mathbf{G}^H\mathbf{W}_a^H\mathbf{W}_d^H\!]_{kk} \!+\! \frac{\sigma_d^2}{\alpha^2} [\!\mathbf{W}_d\mathbf{W}_a\mathbf{W}_a^H\mathbf{W}_d^H\!]_{kk}}.
\end{align}
%\end{small}
Then, the ergodic rate of the $k$th user can be expressed as
\begin{equation}\label{eq:rate1}
\overline{R}_k=\frac{1}{2}\mathbb{E}\left\{\log_2(1+\gamma_k)\right\}.
\end{equation}

\section{New Preliminaries}

Due to the use of hybrid detection, it is necessary to derive the properties of $\mathbf{W}_a\mathbf{G}$ and $\mathbf{W}_a$ in \eqref{eq:est_signal} for analyzing $\overline{R}_k$.
The main difficulty relies on the dependence of the phases of $\mathbf{W}_a$ and $\mathbf{G}$ according to the design of $\mathbf{W}_a$ in \eqref{eq:Fa}.
The following theorem and two propositions are new essential results which facilitate our following analysis.

\noindent
\begin{mytheorem}\label{mytheorem1}
For independent $N_r\times 1$ random vectors $\mathbf{g}_i$, $\mathbf{g}_j$ and $\mathbf{h}_k$, where $\mathbf{g}_i$ and $\mathbf{g}_j$ are identically distributed as $\mathcal{CN}(\mathbf{0}, \eta\mathbf{I}_{N_r})$, we have
\begin{equation}\label{eq:E gigj}
\mathbb{E}\left\{\mathbf{g}_i^H\mathbf{g}_j e^{j(\phi_1-\phi_2)}\right\} = \begin{cases}
\frac{\pi\eta}{4}  &i\neq j\\
\eta N_r           &i=j,
\end{cases}
\end{equation}
where $\phi_1=\angle \mathbf{g}_j^H\mathbf{h}_k$ and $\phi_2=\angle \mathbf{g}_i^H\mathbf{h}_k$.
\end{mytheorem}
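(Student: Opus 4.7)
The plan is to split into the trivial diagonal case $i=j$ and the non-trivial off-diagonal case $i\neq j$. When $i=j$ one has $\phi_1=\phi_2$, so the exponential is $1$ and the quantity reduces to $\mathbb{E}\{\|\mathbf{g}_i\|^2\}=\eta N_r$. All the work lies in the case $i\neq j$, and my plan is to reduce it to a one-dimensional calculation by exploiting the unitary invariance of complex Gaussians.

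First I would condition on $\mathbf{h}_k$. Choose any unitary $U$ (measurable in $\mathbf{h}_k$) with $U\mathbf{h}_k=\|\mathbf{h}_k\|\mathbf{e}_1$. Since $\mathbf{g}_i,\mathbf{g}_j\sim\mathcal{CN}(\mathbf{0},\eta\mathbf{I}_{N_r})$ are independent of $\mathbf{h}_k$ and rotationally invariant, the joint law of $(U\mathbf{g}_i,U\mathbf{g}_j)$ equals that of $(\mathbf{g}_i,\mathbf{g}_j)$. Moreover $\mathbf{g}_i^H\mathbf{g}_j$, $\mathbf{g}_j^H\mathbf{h}_k$, and $\mathbf{g}_i^H\mathbf{h}_k$ are all unchanged when the two $\mathbf{g}$'s and $\mathbf{h}_k$ are replaced by $U\mathbf{g}_i,U\mathbf{g}_j,U\mathbf{h}_k$. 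This lets me assume without loss of generality that $\mathbf{h}_k=\|\mathbf{h}_k\|\mathbf{e}_1$, so that $\phi_1=\angle g_{j,1}^{*}=-\angle g_{j,1}$ and $\phi_2=-\angle g_{i,1}$. Hence
\begin{equation*}
e^{j(\phi_1-\phi_2)}=\frac{g_{j,1}^{*}\,g_{i,1}}{|g_{j,1}|\,|g_{i,1}|}.
\end{equation*}

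Next I would expand $\mathbf{g}_i^H\mathbf{g}_j=\sum_{m=1}^{N_r} g_{i,m}^{*}g_{j,m}$ and treat each term in the resulting sum. For $m\neq 1$, the factor $g_{i,m}^{*}g_{j,m}$ is independent of every remaining variable ($g_{i,1}$ and $g_{j,1}$), and the two components $g_{i,m}$, $g_{j,m}$ are themselves independent with zero mean; the expectation therefore factors through $\mathbb{E}\{g_{i,m}^{*}\}\mathbb{E}\{g_{j,m}\}=0$. Only the $m=1$ term survives, and it simplifies to
\begin{equation*}
\mathbb{E}\!\left\{\frac{|g_{i,1}|^2|g_{j,1}|^2}{|g_{i,1}||g_{j,1}|}\right\}=\mathbb{E}\{|g_{i,1}|\}\,\mathbb{E}\{|g_{j,1}|\}
\end{equation*}
by independence of $\mathbf{g}_i$ and $\mathbf{g}_j$.

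Finally I would invoke the standard fact that $|g|$ is Rayleigh with mean $\tfrac{\sqrt{\pi\eta}}{2}$ when $g\sim\mathcal{CN}(0,\eta)$, giving the product $\pi\eta/4$; since the result is deterministic, removing the conditioning on $\mathbf{h}_k$ leaves the same value. The main obstacle (more a bookkeeping hazard than a technical difficulty) is the unitary-invariance reduction: one must verify that $U$, which depends on $\mathbf{h}_k$, can be applied inside the conditional expectation without disturbing the joint law of $(\mathbf{g}_i,\mathbf{g}_j)$, and that all four quantities of interest are invariant under the simultaneous rotation. Once that is cleanly argued, the rest is a short moment computation.
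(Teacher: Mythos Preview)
Your proposal is correct and follows essentially the same route as the paper: both reduce to $\mathbf{h}_k=\mathbf{e}_1$ via unitary invariance of the complex Gaussian vectors, then use independence of $\mathbf{g}_i$ and $\mathbf{g}_j$ together with the Rayleigh mean $\mathbb{E}\{|g|\}=\sqrt{\pi\eta}/2$ to obtain $\pi\eta/4$. The only cosmetic difference is that the paper factors the expectation as $\mathbb{E}\{\mathbf{g}_i^H e^{-j\varphi_{i1}}\}\,\mathbb{E}\{\mathbf{g}_j e^{j\varphi_{j1}}\}$ at the vector level before taking the inner product, whereas you expand $\mathbf{g}_i^H\mathbf{g}_j$ into scalar coordinates first; the underlying computation is identical.
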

\begin{proof}
The major difficulty comes from the dependence between $\mathbf{g}_i$ and $\phi_k~(k=1,2)$.
Considering $\phi_1=\angle \mathbf{g}_j^H\mathbf{h}_k$ and $\phi_2=\angle \mathbf{g}_i^H\mathbf{h}_k$, the expectation of $\mathbf{g}_i^H\mathbf{g}_j e^{j(\phi_1-\phi_2)}$ in \eqref{eq:E gigj} is taken jointly over different $\mathbf{g}_i$, $\mathbf{g}_j$ and $\mathbf{h}_k$, which implies
\begin{equation}\label{eq:gigjphi1phi2}
\mathbb{E}\!\left\{\!\mathbf{g}_i^H\mathbf{g}_j e^{j(\phi_1-\phi_2)}\!\!\right\} \!\!= \! \mathbb{E}_{\mathbf{h}_k}\!\!\left\{\mathbb{E}_{\mathbf{g}_i,\mathbf{g}_j}\!\left\{\!\mathbf{g}_i^H\mathbf{g}_j e^{j(\phi_1-\phi_2)}\Big|\mathbf{h}_k\!\right\}\!\!\right\}\!\!.
\end{equation}

Firstly we need to calculate the expectation of $\mathbf{g}_i^H\mathbf{g}_j e^{j(\phi_1-\phi_2)}$ for any given $\mathbf{h}_k$ averaging over $\mathbf{g}_i$ and $\mathbf{g}_j$. However, according to the definition of $\phi_1$ and $\phi_2$, $\phi_1-\phi_2$ is also a nonlinear function of $\mathbf{g}_i$ and $\mathbf{g}_j$, which makes the expectation in \eqref{eq:gigjphi1phi2} over $\mathbf{g}_i$ and $\mathbf{g}_j$ hard to evaluate in general.
Fortunately in the following, we are able to prove that the conditional expectation in \eqref{eq:gigjphi1phi2} given $\mathbf{h}_k$ is a value irrelevant to $\mathbf{h}_k$. It implies that we can calculate the expectation conditioned on any realization of $\mathbf{h}_k$, e.g., $\mathbf{h}_k = \mathbf{e}_1 $ where $\mathbf{e}_1$ is an $N_r\times 1$ unit vector whose first element is 1 and others are 0s. By substituting $\mathbf{h}_k = \mathbf{e}_1$ into \eqref{eq:gigjphi1phi2} and dividing the calculation of the expectation into two cases, i.e., $i\neq j$ and $i=j$, we then obtain the analytical result of the expectation.

1) Now, we first prove that the above expectation of $\mathbf{g}_i^H\mathbf{g}_j e^{j(\phi_1-\phi_2)}$ for any given $\mathbf{h}_k$ returns a value that is irrelevant to $\mathbf{h}_k$. Let us focus on the inner conditional expectation taken over $\mathbf{g}_i$ and $\mathbf{g}_j$ in \eqref{eq:gigjphi1phi2}. Firstly we give the singular value decomposition (SVD) of $\mathbf{h}_k$ as
\begin{equation}\label{eq:SVD}
\mathbf{h}_k = \beta \mathbf{U} \mathbf{e}_1,
\end{equation}
where $\beta$ is the singular value of $\mathbf{h}_k$ and $\mathbf{U}$ is an $N_r\times N_r$ unitary matrix. For a certain $\mathbf{h}_k$, there can be many possible $\mathbf{U}$'s. Specifically, we can construct a $\mathbf{U}$ by
\begin{equation}
\mathbf{U} = \left[\frac{\mathbf{h}_k}{\|\mathbf{h}_k\|}, \mathbf{u}_2, \cdots, \mathbf{u}_{N_r}\right],
\end{equation}
where $\{\mathbf{u}_2, \cdots, \mathbf{u}_{N_r}\}$ are a spanned orthogonal basis which makes $\mathbf{U}$ unitary, and $\beta = \|\mathbf{h}_k\|$.
Using \eqref{eq:SVD}, the expectation of $\mathbf{g}_i^H\mathbf{g}_j e^{j(\phi_1-\phi_2)}$ conditioned on $\mathbf{h}_k$ follows
\begin{align}\label{eq:given hk1}
\nonumber
&\mathbb{E}_{\mathbf{g}_i,\mathbf{g}_j}\left\{\mathbf{g}_i^H\mathbf{g}_j e^{j(\phi_1-\phi_2)}\Big|\mathbf{h}_k\right\}\\\nonumber
\overset{(a)}{=}&\mathbb{E}_{\mathbf{g}_i,\mathbf{g}_j}\left\{\mathbf{g}_i^H\mathbf{g}_j e^{j(\phi_1-\phi_2)}\Big|\mathbf{h}_k=\beta\mathbf{U}\mathbf{e}_1\right\}\\\nonumber
\overset{(b)}{=}&\mathbb{E}_{\mathbf{g}_i,\mathbf{g}_j}\left\{\mathbf{g}_i^H\mathbf{g}_j e^{j(\phi_1-\phi_2)}\Big|\mathbf{h}_k=\mathbf{U}\mathbf{e}_1\right\}\\\nonumber
\overset{(c)}{=}&\mathbb{E}_{\mathbf{g}_i,\mathbf{g}_j}\left\{\left(\mathbf{U}^H\mathbf{g}_i\right)^H\mathbf{U}^H\mathbf{g}_j e^{j(\phi_1'-\phi_2')}\Big|\mathbf{h}'_k=\mathbf{e}_1\right\}\\\nonumber
\overset{(d)}{=}&\mathbb{E}_{\mathbf{g}'_i,\mathbf{g}'_j}\left\{\mathbf{g'}_i^{H}\mathbf{g}'_j e^{j(\phi_1'-\phi_2')}\Big|\mathbf{h}'_k=\mathbf{e}_1\right\}\\
\overset{(e)}{=}&\mathbb{E}_{\mathbf{g}_i,\mathbf{g}_j}\left\{\mathbf{g}_i^H\mathbf{g}_j e^{j(\phi_1-\phi_2)}\Big|\mathbf{h}_k=\mathbf{e}_1\right\},
\end{align}
where $(a)$ applies \eqref{eq:SVD}, $(b)$ follows from the fact that $\phi_1$ and $\phi_2$ are irrelevant to the scaling factor $\beta$,
$(c)$ comes from the fact that the left multiplication of unitary $\mathbf{U}^H$ does not change the value of $\mathbf{g}_i^H\mathbf{g}_j$,
i.e., $\left(\mathbf{U}^H\mathbf{g}_i\right)^H\mathbf{U}^H\mathbf{g}_j= \mathbf{g}_i^H \mathbf{U} \mathbf{U}^H \mathbf{g}_j=\mathbf{g}_i^H\mathbf{g}_j$ for unitary $\mathbf{U}^H$.
Then, let us define a new vector as $\mathbf{h}'_k \triangleq \mathbf{U}^H \mathbf{h}_k$ and we have $\mathbf{h}'_k=\mathbf{e}_1$ because of unitary $\mathbf{U}^H$ and the condition in (b), i.e., $\mathbf{h}_k=\mathbf{U}\mathbf{e}_1$.
Now we have
\begin{align}
\nonumber
\phi_1' \triangleq \angle \left(\mathbf{U}^H\mathbf{g}_j\right)^H\mathbf{h}'_k
%=\angle \left(\mathbf{U}^H\mathbf{g}_j\right)^H\mathbf{e}_1
= \angle \mathbf{g}_j^H\mathbf{U}\mathbf{e}_1=\angle \mathbf{g}_j^H\mathbf{h}_k=\phi_1,\\
\phi_2' \triangleq \angle \left(\mathbf{U}^H\mathbf{g}_i\right)^H\mathbf{h}'_k
%=\angle \left(\mathbf{U}^H\mathbf{g}_i\right)^H\mathbf{e}_1
= \angle \mathbf{g}_i^H\mathbf{U}\mathbf{e}_1=\angle \mathbf{g}_i^H\mathbf{h}_k=\phi_2,
\label{phi22}
\end{align}
which implies $\phi_1'-\phi_2'=\phi_1-\phi_2$.
These above steps then establish the equality from (b) to (c).
In step (d), we simply use definitions of vectors that $\mathbf{g}'_i \triangleq \mathbf{U}^H\mathbf{g}_i$ and $\mathbf{g}'_j \triangleq \mathbf{U}^H\mathbf{g}_j$.
According to \eqref{phi22}, we have
$\phi_1'=\angle \mathbf{g'}_j^H\mathbf{h}'_k$
and
$\phi_2'=\angle \mathbf{g'}_i^H\mathbf{h}'_k$.
Finally, we obtain (e) due to the fact that $\mathbf{g}'_i$ and $\mathbf{g}'_j$ have the same distribution with $\mathbf{g}_i$ and $\mathbf{g}_j$, respectively
\cite[Th. 3.7.10]{Gallager2013}.
From \eqref{eq:given hk1}, it shows that the conditional expectation for any $\mathbf{h}_k$ is irrelevant to $\mathbf{h}_k$ and the expectation equals to that with $\mathbf{h}_k = \mathbf{e}_1$. Substituting \eqref{eq:given hk1} in \eqref{eq:gigjphi1phi2}, we have
\begin{equation}\label{eq:E gigje1}
\mathbb{E}\left\{\mathbf{g}_i^H\mathbf{g}_j e^{j(\phi_1-\phi_2)}\right\} = \mathbb{E}_{\mathbf{g}_i,\mathbf{g}_j}\left\{\mathbf{g}_i^H\mathbf{g}_j e^{j(\phi_1-\phi_2)}\Big|\mathbf{e}_1\right\}.
\end{equation}

2) In the sequel, we deduce the analytical expression of the expectation in \eqref{eq:E gigje1} by separating calculations for the two cases, i.e., $i\neq j$ and $i=j$.

\noindent
(a) For any $i\neq j$, we have
\begin{align}\label{eq:gi gj e1}
%\nonumber
\mathbb{E}_{\mathbf{g}_i,\mathbf{g}_j}\!\left\{\!\mathbf{g}_i^H\mathbf{g}_j e^{j(\phi_1-\phi_2)}\Big|\mathbf{e}_1\!\right\}
\!\overset{(a)}{=}\mathbb{E}_{\mathbf{g}_i}\!\left\{\!\mathbf{g}_i^H e^{-j\varphi_{i1}}\!\right\}\mathbb{E}_{\mathbf{g}_j}\left\{\!\mathbf{g}_j e^{j\varphi_{j1}}\!\right\}\!,
\end{align}
where $\varphi_{i1}$ and $\varphi_{j1}$ are the phases of the first element of $\mathbf{g}_i^H$ and $\mathbf{g}_j^H$, respectively, and $(a)$ applies the independence of $\mathbf{g}_i$ and $\mathbf{g}_j$. Concerning the term $\mathbb{E}_{\mathbf{g}_i}\left\{\mathbf{g}_i^H e^{-j\varphi_{i1}}\right\}$, we have
\begin{equation}\label{eq:gi1}
\mathbb{E}_{g_{i1}}\left\{g_{i1} e^{-j\varphi_{i1}}\right\} = \mathbb{E}\left\{|g_{i1}|\right\} \overset{(a)}{=}\frac{\sqrt{\pi\eta}}{2},
\end{equation}
where $g_{i1}$ represents the first entry of $\mathbf{g}_i^H$ and $(a)$ comes from the fact that the magnitude of $g_{i1}$ follows the Rayleigh distribution with mean $\frac{\sqrt{\pi\eta}}{2}$, since $g_{i1}$ is distributed as $\mathcal{CN}(0, \eta)$.
For the $k$th $(k\neq1)$ element of $\mathbb{E}_{\mathbf{g}_i}\left\{\mathbf{g}_i^H e^{-j\varphi_{i1}}\right\}$, we have
\begin{align}\label{eq:gi1gik}
\mathbb{E}_{g_{i1},g_{ik}}\left\{g_{ik} e^{-j\varphi_{i1}}\right\}
\overset{(a)}{=}&\mathbb{E}\left\{|g_{ik}|\right\}
\mathbb{E}\left\{e^{j\varphi_{ik}}\right\}\mathbb{E}\left\{e^{-j\varphi_{i1}}\right\}
\overset{(b)}{=} 0,
\end{align}
where $g_{ik}$ represents the $k$th element of $\mathbf{g}_i^H$ and $(a)$ is obtained by using the independence of $|g_{ik}|$, $e^{j\varphi_{ik}}$ and $e^{-j\varphi_{i1}}$. As both $g_{i1}$ and $g_{ik}$ follow $\mathcal{CN}(0,\eta)$, their phases follow $\mathcal{U}[0,2\pi)$. Hence $\mathbb{E}\left\{e^{j\varphi_{ik}}\right\}$ =
$\mathbb{E}\left\{e^{-j\varphi_{i1}}\right\}$ = 0 and $(b)$ is obtained.

To conclude, $\mathbb{E}_{\mathbf{g}_i}\left\{\mathbf{g}_i^H e^{-j\varphi_{i1}}\right\}$ can be expressed from \eqref{eq:gi1} and \eqref{eq:gi1gik} as
\begin{equation}\label{eq:gi e1}
\mathbb{E}_{\mathbf{g}_i}\left\{\mathbf{g}_i^H e^{-j\varphi_{i1}}\right\}
= \frac{\sqrt{\pi\eta}}{2}\mathbf{e}_1^H.
\end{equation}
Similarly, we have
\begin{equation}\label{eq:gj e1}
\mathbb{E}_{\mathbf{g}_j}\left\{\mathbf{g}_j e^{-j\varphi_{i1}}\right\}
= \frac{\sqrt{\pi\eta}}{2}\mathbf{e}_1.
\end{equation}
Substituting \eqref{eq:gi e1} and \eqref{eq:gj e1} into \eqref{eq:gi gj e1}, the expectation of $\mathbf{g}_i^H\mathbf{g}_j e^{j(\phi_1-\phi_2)}$ for any $i\neq j$ is
\begin{equation}\label{eq:gi gj1}
\mathbb{E}_{\mathbf{g}_i,\mathbf{g}_j}\left\{\mathbf{g}_i^H\mathbf{g}_j e^{j(\phi_1-\phi_2)}\Big|\mathbf{e}_1\right\}=
\frac{\pi \eta}{4}.
\end{equation}

\noindent
(b) For any $i=j$, we have
\begin{align}\label{eq:gigi1}
\mathbb{E}_{\mathbf{g}_i,\mathbf{g}_j}\left\{\mathbf{g}_i^H\mathbf{g}_j e^{j(\phi_1-\phi_2)}\Big|\mathbf{e}_1\right\}
=&\mathbb{E}\left\{\mathbf{g}_i^H\mathbf{g}_i\right\}
\overset{(a)}{=}2\eta\frac{\Gamma(\frac{N_r}{2}+1)}{\Gamma(\frac{N_r}{2})}
\overset{(b)}{=}\eta N_r,
\end{align}
where $(a)$ is obtained by using the property of Wishart matrix \cite[Th. 3.2.14]{muirhead1982aspects} and $(b)$ is derived by using the property of Gamma function that $\Gamma(x+1)=x\Gamma(x)$ for $x>0$.

Combining \eqref{eq:gi gj1} and \eqref{eq:gigi1}, we have
\begin{equation}\label{eq:vivj1}
\mathbb{E}_{\mathbf{g}_i,\mathbf{g}_j}\left\{\mathbf{g}_i^H\mathbf{g}_j e^{j(\phi_1-\phi_2)}\Big|\mathbf{e}_1\right\} =
\begin{cases}
\frac{\pi\eta}{4}  &i\neq j\\
\eta N_r           &i=j.
\end{cases}
\end{equation}

Finally, the proof completes by substituting \eqref{eq:vivj1} into \eqref{eq:E gigje1}.
\end{proof}

\noindent \emph{Remark}: Theorem~\ref{mytheorem1} does not show the favorable propagation property of massive MIMO as in some existing cases like \cite{ngo2013energy}.
In Theorem~\ref{mytheorem1}, the dependence of $\phi_1$ and $\phi_2$ with $\mathbf{g}_j$ and $\mathbf{g}_i$ makes the common way of applying the law of large numbers (LLN) in massive MIMO not applicable. This dependence among the random variables results in the value of $\mathbb{E}\left\{\mathbf{g}_i^H\mathbf{g}_j e^{j(\phi_1-\phi_2)}\right\}$ in general nonzero for $i\neq j$ which differs from the popular observation in traditional massive MIMO.

Note that the result in Theorem~\ref{mytheorem1} enables us to further obtain the exact expectation result in closed form in Proposition~\ref{mypro1}, which will be shown useful later in the performance analysis.

\noindent
\begin{mypro}\label{mypro1}
Assume two independent channel matrices $\mathbf{H} = [\mathbf{h}_1,\cdots,\mathbf{h}_K]$ where $\mathbf{h}_j\sim \mathcal{CN}(\mathbf{0}_{N_r},\xi_j\mathbf{I}_{N_r})~(j=1,\cdots, K)$ and $\mathbf{G}$ with entries following i.i.d. $\mathcal{CN}(0,\eta)$.
Let $[\mathbf{W}_a]_{ij}=\frac{1}{\sqrt{N_d}}e^{j\phi_{ij}}$ be the, namely phase-aligning, matrix where $\phi_{ij}=\angle\left[(\mathbf{GH})^H\right]_{ij}$. The expectation of the diagonal element of $\mathbf{W}_a\mathbf{G}\mathbf{G}^H\mathbf{W}_a^H$ equals
\begin{equation}\label{eq:appro_fggf}
\mathbb{E}\left\{\left[\mathbf{W}_a\mathbf{G}\mathbf{G}^H\mathbf{W}_a^H\right]_{kk} \right\} = \eta \left(\frac{\pi}{4}N_d + N_r - \frac{\pi}{4}\right),
\end{equation}
where $\eta$ is the variance of the Gaussian entries in $\mathbf{G}$.
\end{mypro}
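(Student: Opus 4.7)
The plan is to expand the $(k,k)$ diagonal entry into a double sum indexed by the columns of $\mathbf{W}_a$, recognize each summand as an instance of the expectation already computed in Theorem~\ref{mytheorem1}, and then split the sum into diagonal versus off-diagonal contributions and count.

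First I would write the matrix product explicitly. Let $\mathbf{g}_i$ denote the $i$th column of $\mathbf{G}^H$, equivalently the Hermitian transpose of the $i$th row of $\mathbf{G}$, so that $[\mathbf{G}\mathbf{G}^H]_{ij}=\mathbf{g}_i^H\mathbf{g}_j$. Using $[\mathbf{W}_a]_{ki}=\frac{1}{\sqrt{N_d}}e^{j\phi_{ki}}$ and $[\mathbf{W}_a^H]_{jk}=\frac{1}{\sqrt{N_d}}e^{-j\phi_{kj}}$, the target entry expands as
\begin{equation*}
\bigl[\mathbf{W}_a\mathbf{G}\mathbf{G}^H\mathbf{W}_a^H\bigr]_{kk}=\frac{1}{N_d}\sum_{i=1}^{N_d}\sum_{j=1}^{N_d}e^{j(\phi_{ki}-\phi_{kj})}\mathbf{g}_i^H\mathbf{g}_j.
\end{equation*}

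Next I would reconcile the phase difference with the one appearing in \eqref{eq:E gigj}. Since $[(\mathbf{GH})^H]_{ki}=\mathbf{h}_k^H\mathbf{g}_i$ and $\mathbf{g}_i^H\mathbf{h}_k=(\mathbf{h}_k^H\mathbf{g}_i)^*$, I have $\phi_{ki}=-\angle(\mathbf{g}_i^H\mathbf{h}_k)$. Setting $\phi_1\triangleq\angle\mathbf{g}_j^H\mathbf{h}_k$ and $\phi_2\triangleq\angle\mathbf{g}_i^H\mathbf{h}_k$ (the notation of Theorem~\ref{mytheorem1}) then gives $\phi_{ki}-\phi_{kj}=\phi_1-\phi_2$ in every summand. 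Because the $\mathbf{g}_i$ for distinct $i$ come from disjoint rows of $\mathbf{G}$, they are mutually independent and independent of $\mathbf{h}_k$, so the hypotheses of Theorem~\ref{mytheorem1} are met term by term. Applying \eqref{eq:E gigj} and separating the $N_d$ diagonal summands ($i=j$) from the $N_d(N_d-1)$ off-diagonal ones ($i\neq j$) yields
\begin{equation*}
\mathbb{E}\bigl\{[\mathbf{W}_a\mathbf{G}\mathbf{G}^H\mathbf{W}_a^H]_{kk}\bigr\}=\frac{1}{N_d}\left[N_d\cdot\eta N_r+N_d(N_d-1)\cdot\frac{\pi\eta}{4}\right],
\end{equation*}
which simplifies directly to $\eta\bigl(\tfrac{\pi}{4}N_d+N_r-\tfrac{\pi}{4}\bigr)$, matching \eqref{eq:appro_fggf}.

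The main obstacle I anticipate is purely bookkeeping rather than analysis: one must fix conventions so that the $\mathbf{g}_i$ arising from the $i$th row of $\mathbf{G}$ in the quadratic form genuinely coincides with the $\mathbf{g}_i$ of Theorem~\ref{mytheorem1}, and be careful with the sign flip $\phi_{ki}=-\angle(\mathbf{g}_i^H\mathbf{h}_k)$ induced by the $(\cdot)^H$ in the definition of $\phi_{ij}$. Once those indexing details are pinned down, the proof reduces to invoking the two cases of Theorem~\ref{mytheorem1} and counting.
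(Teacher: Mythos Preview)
Your proposal is correct and follows essentially the same route as the paper's own proof: expand $[\mathbf{W}_a\mathbf{G}\mathbf{G}^H\mathbf{W}_a^H]_{kk}$ as a double sum over the rows of $\mathbf{G}$, identify each summand with the expectation in Theorem~\ref{mytheorem1}, and then count the $N_d$ diagonal and $N_d(N_d-1)$ off-diagonal terms. Your careful bookkeeping on the sign flip $\phi_{ki}=-\angle(\mathbf{g}_i^H\mathbf{h}_k)$ is exactly the point the paper handles by writing $[\mathbf{W}_a]_{ij}=\frac{1}{\sqrt{N_d}}\frac{(\mathbf{g}_j^H\mathbf{h}_i)^H}{|\mathbf{g}_j^H\mathbf{h}_i|}$ before passing to the exponential form, so there is no substantive difference in approach.
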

\begin{proof}
Since the elements of $\mathbf{W}_a$ are extracted from the phases of $(\mathbf{G}\mathbf{H})^H$, $\mathbf{W}_a$ is obviously dependent of $\mathbf{G}$ which makes the expectation difficult to evaluate directly. To tackle this difficulty, we derive the value of $\left[\mathbf{W}_a\mathbf{G}\mathbf{G}^H\mathbf{W}_a^H\right]_{kk}$ by first expanding the term in summation as follows and then element-wisely evaluate the terms with the help of Theorem~\ref{mytheorem1}. From the definition of $\mathbf{W}_a$, we can equivalently write the $(i,j)$th entry of $\mathbf{W}_a$ as
\begin{equation}
[\mathbf{W}_a]_{ij}=\frac{1}{\sqrt{N_d}}\frac{(\mathbf{g}_j^H\mathbf{h}_i)^H}{\big|\mathbf{g}_j^H\mathbf{h}_i\big|},
\end{equation}
where $\mathbf{g}_j^H$ is the $j$th row of $\mathbf{G}$ and $\mathbf{h}_i$ is the $k$th column of $\mathbf{H}$. Now we have
\begin{align}
\left[\mathbf{W}_a\mathbf{G}\mathbf{G}^H\mathbf{W}_a^H\right]_{kk}
=\frac{1}{N_d}\sum_{i=1}^{N_d}\sum_{j=1}^{N_d}\frac{\mathbf{h}_k^H\mathbf{g}_i\mathbf{g}_j^H\mathbf{h}_k}
{\big|\mathbf{g}_i^H\mathbf{h}_k\big|\big|\mathbf{g}_j^H\mathbf{h}_k\big|}\mathbf{g}_i^H\mathbf{g}_j
=\frac{1}{N_d}\sum_{i=1}^{N_d}\sum_{j=1}^{N_d}\mathbf{g}_i^H\mathbf{g}_je^{j(\phi_1-\phi_2)},\label{eq:WaGGWa1}
\end{align}
where $\phi_1=\angle \mathbf{g}_j^H\mathbf{h}_k$ and $\phi_2=\angle \mathbf{g}_i^H\mathbf{h}_k$. Then,
\begin{align}\label{eq:E WGGW1}
\mathbb{E}\left\{\left[\mathbf{W}_a\mathbf{G}\mathbf{G}^H\mathbf{W}_a^H\right]_{kk} \right\}
=\frac{1}{N_d}\sum_{i=1}^{N_d}\sum_{j=1}^{N_d}\mathbb{E}\left\{\mathbf{g}_i^H\mathbf{g}_j e^{j(\phi_1-\phi_2)}\right\}
\overset{(a)}{=}\eta \left(\frac{\pi}{4}N_d + N_r - \frac{\pi}{4}\right),
\end{align}
where $(a)$ applies Theorem~\ref{mytheorem1}. This completes the proof.
\end{proof}

\noindent
\begin{mypro}\label{mypro2}
Let $\mathbf{H}$, $\mathbf{G}$ and $\mathbf{W}_a$ be defined as in the above Proposition~\ref{mypro1}. Then, we have $\phi_{ij}\sim\mathcal{U}[0, 2\pi)$ and
\begin{equation}
\mathbf{W}_a\mathbf{W}_a^H\xrightarrow{a.s.}\mathbf{I}_K,
\end{equation}
where the almost sure convergence, $\xrightarrow{a.s.}$, corresponds to large $N_d$ tending to infinity.
\end{mypro}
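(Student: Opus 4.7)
The claim splits into two parts, which I would handle separately.

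For the phase distribution, I would observe that $[(\mathbf{GH})^H]_{ij} = (\mathbf{g}_j^H\mathbf{h}_i)^*$ and condition on $\mathbf{h}_i$. For any fixed nonzero $\mathbf{h}_i$, the scalar $\mathbf{g}_j^H \mathbf{h}_i$ is a linear combination of i.i.d.\ $\mathcal{CN}(0,\eta)$ entries, hence itself a circularly symmetric complex Gaussian. Its phase is therefore uniform on $[0,2\pi)$ and independent of its magnitude, and the law of total probability transfers this uniformity to the unconditional marginal of $\phi_{ij}$.

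For the matrix-level convergence, I would expand
\begin{equation*}
[\mathbf{W}_a\mathbf{W}_a^H]_{k\ell} = \frac{1}{N_d}\sum_{i=1}^{N_d} e^{j(\phi_{ki}-\phi_{\ell i})}.
\end{equation*}
The diagonal ($k=\ell$) equals $1$ term by term and needs no limit. For the off-diagonal ($k \neq \ell$), the rows $\mathbf{g}_i$ of $\mathbf{G}$ are i.i.d.\ across $i$, so the summands are i.i.d.\ conditional on $\mathbf{H}$ and uniformly bounded by one in modulus. A strong-law-of-large-numbers argument then delivers a.s.\ convergence of the average to $\mathbb{E}\{e^{j(\phi_{ki}-\phi_{\ell i})}\}$. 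To verify this target is zero, I would reverse the conditioning: fix $\mathbf{g}_i$ and average over the independent $\mathbf{h}_k, \mathbf{h}_\ell$, which makes $e^{j\phi_{ki}}$ and $e^{-j\phi_{\ell i}}$ independent phases of circularly symmetric complex Gaussians, each with zero mean; the product thus vanishes in expectation.

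The main obstacle is reconciling the two conditionings above: strictly, the SLLN with $\mathbf{H}$ held fixed returns the \emph{conditional} mean $\mathbb{E}\{e^{j(\phi_k - \phi_\ell)} \mid \mathbf{h}_k,\mathbf{h}_\ell\}$, which is generically nonzero for a given realization because $\mathbf{g}^H \mathbf{h}_k$ and $\mathbf{g}^H \mathbf{h}_\ell$ are correlated when $\mathbf{h}_k, \mathbf{h}_\ell$ are not orthogonal. To close the gap cleanly I would invoke the massive-MIMO concentration $|\mathbf{h}_k^H \mathbf{h}_\ell|^2/(\|\mathbf{h}_k\|^2 \|\mathbf{h}_\ell\|^2) = O(1/N_r)$ that holds almost surely; via a rotation reduction analogous to the change of basis in the proof of Theorem~\ref{mytheorem1}, this drives the conditional mean itself to zero as $N_r$ grows, in line with the massive-MIMO regime adopted throughout the rest of the paper.
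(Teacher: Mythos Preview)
Your proposal is sound and in fact more careful than the paper's own argument. Two points of comparison are worth noting.

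For the uniform phase, you take a cleaner route: conditioning on $\mathbf{h}_i$ makes $\mathbf{g}_j^H\mathbf{h}_i$ a circularly symmetric complex Gaussian, and you are done. The paper instead expands $[\mathbf{GH}]_{ij}=\sum_{k}\nu_k e^{j\theta_k}$, notes that each $\theta_k$ is uniform (as the phase of a product of independent complex Gaussians), and then runs a rotation-invariance argument to show that shifting every $\theta_k$ by a fixed $\theta_0$ shifts the resulting phase by $\theta_0$ while preserving probability. Your conditioning argument is shorter and yields the same conclusion; the paper's approach has the minor advantage of not needing $\mathbf{g}_j$ itself to be Gaussian, only that the products $[\mathbf{G}]_{ik}[\mathbf{H}]_{kj}$ have rotationally invariant phases.

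For $\mathbf{W}_a\mathbf{W}_a^H\to\mathbf{I}_K$, your decomposition is identical to the paper's: diagonal entries equal $1$ exactly, and the off-diagonal average is handled by a law of large numbers over $i=1,\dots,N_d$ together with the factorization $\mathbb{E}\{e^{j\phi_{ki}}\}\mathbb{E}\{e^{-j\phi_{\ell i}}\}=0$ obtained by conditioning on $\mathbf{g}_i$. The gap you flag---that the SLLN, applied with $\mathbf{H}$ fixed, returns the \emph{conditional} mean $\mathbb{E}\{e^{j(\phi_{ki}-\phi_{\ell i})}\mid\mathbf{h}_k,\mathbf{h}_\ell\}$, which is generically nonzero because $\mathbf{g}^H\mathbf{h}_k$ and $\mathbf{g}^H\mathbf{h}_\ell$ are correlated---is real, and the paper simply elides it (its notation $\mathbb{E}_{\mathbf{g}_l}\{\cdot\mid\mathbf{g}_l^H\}$ is ambiguous on exactly this point). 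Your proposed remedy via the asymptotic orthogonality $|\mathbf{h}_k^H\mathbf{h}_\ell|/(\|\mathbf{h}_k\|\,\|\mathbf{h}_\ell\|)\to 0$ as $N_r\to\infty$ is the right way to close it, and it is consistent with the large-array regime (fixed $\delta=N_d/N_r$) used throughout the rest of the paper; you might want to state explicitly that you are invoking $N_r\to\infty$ alongside $N_d\to\infty$, since the proposition as phrased mentions only the latter.
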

\begin{proof}
Since $\phi_{ij}=\angle\left[(\mathbf{GH})^H\right]_{ij}$, we have $\angle [\mathbf{GH}]_{ij}=-\phi_{ji}$ and
\begin{equation}\label{eq:GHij}
[\mathbf{GH}]_{ij}
\overset{(a)}{=}\sum_{k=1}^{N_r}\nu_k e^{j\theta_k}
=\big|[\mathbf{GH}]_{ij}\big|e^{-j\phi_{ji}},
\end{equation}
where $(a)$ is obtained by defining $\nu_k$ and $\theta_k$ as the amplitude and phase of $[\mathbf{G}]_{ik}[\mathbf{H}]_{kj}$, respectively. Since $[\mathbf{G}]_{ik}$ and $[\mathbf{H}]_{kj}$ are independent complex Gaussian random variables (RVs), from \cite[eq.~(17)]{Donoughue2012on}, we obtain that the phase of $[\mathbf{G}]_{ik}[\mathbf{H}]_{kj}$, i.e., $\theta_k$, follows the distribution $\mathcal{U}[0, 2\pi)$ and is independent of $\nu_k$. For $[\mathbf{GH}]_{ij}$, we assume its phase $-\phi_{ji}\in[0,2\pi)$.

Now the uniform distribution of $\phi_{ij}$ can be proved by showing that any two realizations of $\phi_{ij}$ have the same probability of occurrence. From \eqref{eq:GHij}, it show that a given set of values of $\theta_k~(k\in\{1, \cdots, N_r\})$, say $\theta_k^{(0)}$, yields a realization of some $\phi_{ij}$, say $\phi_{ij}^{(0)}$. Meanwhile for an arbitrary fixed value $\theta_0 \in [0, 2\pi)$, a given set of values of $\theta_k = \theta_k^{(0)} + \theta_0$ yields the realization of $\phi_{ij}^{(0)} - \theta_0$. Since $\theta_k$ is uniformly distributed, it is directly known that the probability of occurrence of $\theta_k^{(0)}$ is equal to that of $\theta_k^{(0)}+\theta_0$. It implies that the occurrence of $\phi_{ij}^{(0)}$ is the same as that of  $\phi_{ij}^{(0)}-\theta_0$ for any fixed value of $\theta_0$.

\iffalse
Next, we try to prove the uniform distribution of $\phi_{ij}$ by establishing that there is a one-to-one mapping from one combination of $\theta_k~(k\in\{1, \cdots, N_r\})$ generating $-\phi_{ij}$ to the corresponding combination of $\theta_k+\theta_0~(k\in\{1, \cdots, N_r\})$ generating $\theta_0-\phi_{ij}$, where $\theta_0\in[0,2\pi)$ is an arbitrary but fixed phase rotation. It implies that for any realization $-\phi_{ij}$, there must exists a realization of the phase with the same occurrence probability for another arbitrary value, i.e., $\theta_0-\phi_{ij}$ within the range $[0,2\pi)$.
\fi

More specifically, for any combination of $\nu_k$ and $\theta_k~(k\in\{1, \cdots, N_r\})$, generating a certain phase $-\phi_{ji}$ as in \eqref{eq:GHij}, we can add $\theta_0$ to each $\theta_k$. It yeilds
\begin{align}
e^{j\theta_0}[\mathbf{GH}]_{ij}
=\sum_{k=1}^{N_r}\nu_k e^{j(\theta_k+\theta_0)}
=\big|[\mathbf{GH}]_{ij}\big|e^{j(\theta_0-\phi_{ji})}
=\big|[\mathbf{GH}]_{ij}\big|e^{j((\theta_0-\phi_{ji}) \bmod 2\pi)}.
\end{align}
In this way, we get another realization of $[\mathbf{GH}]_{ij}$ with phase $((\theta_0-\phi_{ji}) \bmod 2\pi)\in[0,2\pi)$. Due to the uniform distribution of $\theta_k$, one combination of $\theta_k$ and the corresponding combination of $\theta_k+\theta_0$ share the same probability. Hence, all combinations of $\theta_k+\theta_0$ for any $\theta_0\in[0,2\pi)$ have the same probability. Then, all $(\theta_0-\phi_{ji}) \bmod 2\pi$ for any $\theta_0\in[0,2\pi)$ share the same probability. For any other combinations of $\nu_k'$ and $\theta_k'$, we can arrive at the same conclusion. We thus safely obtain that $\angle [\mathbf{GH}]_{ij}$, i.e., $-\phi_{ji}$, is a uniform RV.
Equivalently, we arrive at
\begin{equation}\label{eq:phi}
\phi_{ij}\sim \mathcal{U}[0,2\pi),
\end{equation}
which is the first part of Proposition~\ref{mypro2}.

Subsequently we consider the asymptotic behavior of $\mathbf{W}_a\mathbf{W}_a^H$, whose $p$th diagonal element is
\begin{equation}\label{eq:WaWapp}
\left[\mathbf{W}_a\mathbf{W}_a^H\right]_{pp} = \frac{1}{N_d}\sum_{l=1}^{N_d}e^{j(\phi_{pl}-\phi_{pl})} = 1.
\end{equation}
For the $(p,q)$th non-diagonal element, according to the LLN, we have
\begin{align}\label{eq:WaWapq}
\left[\mathbf{W}_a\mathbf{W}_a^H\right]_{pq}
\xrightarrow{a.s.}\mathbb{E}_{\mathbf{g}_l}\left\{e^{j(\phi_{pl}-\phi_{ql})}\big|\mathbf{g}_l^H\right\}
\overset{(a)}{=}\mathbb{E}\left\{e^{j\phi_{pl}}\right\}\mathbb{E}\left\{e^{-j\phi_{ql}}\right\} =0,
\end{align}
where $(a)$ utilizes the independence of $e^{j\phi_{pl}}$ and $e^{-j\phi_{ql}}$ conditioned on any given $\mathbf{g}_l^H$, and the last equality uses \eqref{eq:phi}. By combining \eqref{eq:WaWapp} and \eqref{eq:WaWapq}, we complete the proof.
\end{proof}

\section{Achievable Rate Analysis}\label{sec:rate}

In this section, asymptotic user rate is derived under the assumption of large antenna arrays.
Power scaling laws are obtained to reveal the tradeoff between power consumption and hardware cost, while keeping a constant user rate.
For notational brevity, we define $\delta \triangleq N_d/N_r$.

%\subsection{Ergodic Rate Bound}
\subsection{Asymptotic Rate Analysis}
From \eqref{eq:rate1}, the ergodic achievable sum rate is obtained as
\begin{equation}\label{eq:sumrate1}
\overline{R}=\sum_{k=1}^{K}\overline{R}_k.
\end{equation}
Then we focus on characterizing $\overline{R}_k$. It appears difficult to evaluate the exact value of $\overline{R}_k$ even though we have obtained the preliminary results with respect to the complicated and coupling random matrices, like $\mathbf{W}_a\mathbf{G}$, in \eqref{eq:SINR1}. Here we resort to characterizing a tight performance bound to $\overline{R}_k$ under the massive MIMO setup, as given in the following theorem.

\begin{mytheorem}\label{theorem_rate}
Under the assumption of large antenna arrays, a lower bound for the ergodic user rate is
\begin{align}
R_k=\frac{1}{2}\log_2\left(1 + \frac{\pi \eta \xi_k N_d P_r P_u}{\sigma_r^2 \eta P_r (\pi \delta \!+\! 4) \!+\! 4\sigma_d^2 (P_u \sum_{i=1}^{K}\xi_i \!+\! \sigma_r^2)}\right).
\label{eq:R_k}
\end{align}
\end{mytheorem}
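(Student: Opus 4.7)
The plan is to apply Jensen's inequality on the convex function $\log_2(1+1/x)$, which yields
$\overline{R}_k \ge \tfrac{1}{2}\log_2\!\bigl(1+P_u/\mathbb{E}\{D_k\}\bigr)$,
where $D_k \triangleq \sigma_r^2 T_1 + \tfrac{\sigma_d^2}{\alpha^2}T_2$ collects the two noise contributions in \eqref{eq:SINR1}, with $T_1 \triangleq [\mathbf{W}_d\mathbf{W}_a\mathbf{G}\mathbf{G}^H\mathbf{W}_a^H\mathbf{W}_d^H]_{kk}$ and $T_2 \triangleq [\mathbf{W}_d\mathbf{W}_a\mathbf{W}_a^H\mathbf{W}_d^H]_{kk}$. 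The task then reduces to computing $\mathbb{E}\{D_k\}$ in closed form in the massive-array limit.

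The central structural observation is that the phase-aligned cascaded channel $\mathbf{A}\triangleq \mathbf{W}_a\mathbf{G}\mathbf{H}$ asymptotically diagonalizes. Since $\phi_{kl}=-\angle[\mathbf{G}\mathbf{H}]_{lk}$, the diagonal entry simplifies to the coherent sum $[\mathbf{A}]_{kk}=N_d^{-1/2}\sum_{l=1}^{N_d}|[\mathbf{G}\mathbf{H}]_{lk}|$, whereas for $j\ne k$ the phases are mis-aligned and $[\mathbf{A}]_{kj}$ is of strictly lower order than $|[\mathbf{A}]_{kk}|=\Theta(\sqrt{N_d})$. Because $\mathbf{W}_d=\mathbf{A}^{-1}$, this diagonality will license the approximations $T_1 \approx [\mathbf{W}_a\mathbf{G}\mathbf{G}^H\mathbf{W}_a^H]_{kk}/|[\mathbf{A}]_{kk}|^2$ and $T_2 \approx 1/|[\mathbf{A}]_{kk}|^2$, the second numerator being exactly $1$ by Proposition~\ref{mypro2}.

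Three expectations then close the calculation. Proposition~\ref{mypro1} already supplies $\mathbb{E}\{[\mathbf{W}_a\mathbf{G}\mathbf{G}^H\mathbf{W}_a^H]_{kk}\}=\eta(\tfrac{\pi}{4}N_d+N_r-\tfrac{\pi}{4})$. For the effective gain, conditioning on $\mathbf{h}_k$ makes the $N_d$ magnitudes $|[\mathbf{G}\mathbf{H}]_{lk}|$ i.i.d.\ Rayleigh with conditional first and second moments $\tfrac{\sqrt{\pi\eta}}{2}\|\mathbf{h}_k\|$ and $\eta\|\mathbf{h}_k\|^2$; expanding $|[\mathbf{A}]_{kk}|^2$ as a double sum and averaging with $\mathbb{E}\{\|\mathbf{h}_k\|^2\}=\xi_k N_r$ yields $\mathbb{E}\{|[\mathbf{A}]_{kk}|^2\}=\tfrac{\eta\xi_k N_r(\pi N_d+4-\pi)}{4}$. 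Finally, from \eqref{eq:alpha} together with $\mathbb{E}\{\mathrm{Tr}(\mathbf{H}^H\mathbf{H})\}=N_r\sum_i\xi_i$ one obtains $\mathbb{E}\{1/\alpha^2\}=\tfrac{N_r(P_u\sum_i\xi_i+\sigma_r^2)}{P_r}$. Combining via the large-$N_d$ factorizations $\mathbb{E}\{T_1\}\approx \mathbb{E}\{[\mathbf{W}_a\mathbf{G}\mathbf{G}^H\mathbf{W}_a^H]_{kk}\}/\mathbb{E}\{|[\mathbf{A}]_{kk}|^2\}$ and $\mathbb{E}\{T_2/\alpha^2\}\approx \mathbb{E}\{T_2\}\mathbb{E}\{1/\alpha^2\}$, and retaining the dominant behavior $\pi N_d+4-\pi \approx \pi N_d$ and $\pi N_d+4N_r-\pi \approx N_r(\pi\delta+4)$, compresses $\mathbb{E}\{D_k\}/P_u$ into exactly the reciprocal of the SINR appearing in the claimed $R_k$; substituting into the Jensen bound completes the proof.

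The main obstacle is the asymptotic diagonal replacement $[\mathbf{A}^{-1}\mathbf{M}\mathbf{A}^{-H}]_{kk}\approx [\mathbf{M}]_{kk}/|[\mathbf{A}]_{kk}|^2$ for $\mathbf{M}\in\{\mathbf{W}_a\mathbf{G}\mathbf{G}^H\mathbf{W}_a^H,\mathbf{W}_a\mathbf{W}_a^H\}$: one must show that the off-diagonal entries of $\mathbf{A}$ decay fast enough relative to $|[\mathbf{A}]_{kk}|=\Theta(\sqrt{N_d})$ that inverting $\mathbf{A}$ does not resurrect inter-user coupling in the noise-enhancement terms. This is precisely where Theorem~\ref{mytheorem1} and Proposition~\ref{mypro2} do their heaviest lifting; once it is cleared, the remainder is law-of-large-numbers concentration plus the algebra outlined above.
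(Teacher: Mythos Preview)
Your proposal is correct and follows essentially the same route as the paper: Jensen's inequality, asymptotic diagonalization of $\mathbf{A}=\mathbf{W}_a\mathbf{G}\mathbf{H}$ (which the paper states as the a.s.\ convergence of $\sqrt{N_rN_d}\mathbf{W}_d$ to a deterministic diagonal in Lemma~\ref{lemma2}), Proposition~\ref{mypro1} for the first noise term, Proposition~\ref{mypro2} plus $\mathbb{E}\{1/\alpha^2\}$ for the second, and then the same leading-order simplifications. Your conditioning-on-$\mathbf{h}_k$ computation of $\mathbb{E}\{|[\mathbf{A}]_{kk}|^2\}$ is a slightly more direct variant of the paper's CLT-based argument in Lemma~\ref{lemma2}, but the content and the final assembly (packaged there as Lemmas~\ref{lemma3}--\ref{lemma4}) are the same.
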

\begin{proof}
%See Appendix B.
By substituting \eqref{eq:SINR1} into \eqref{eq:rate1}, we have
\begin{align}
%\label{eq:Jensen}
\nonumber
\overline{R}_k
&\overset{(a)}{=}\frac{1}{2}\mathbb{E}_{x,y}\left\{\log_2\left(1 + \frac{P_u}{\sigma_r^2 x + \sigma_d^2 y}\right)\right\}\\\nonumber
&\overset{(b)}\geq \frac{1}{2}\log_2\left(1 + \frac{P_u}{\sigma_r^2 \mathbb{E}\{x\} + \sigma_d^2 \mathbb{E}\{y\}}\right)\\
&\overset{(c)}{\rightarrow} \frac{1}{2}\log_2\left(1 + \frac{\pi \eta \xi_k N_d P_r P_u}{\sigma_r^2 \eta P_r (\pi \delta \!+\! 4) \!+\! 4\sigma_d^2 (P_u \sum_{i=1}^{K}\xi_i \!+\! \sigma_r^2)}\right)\!,
%&\triangleq  R_k,
\end{align}
where the RVs $x$ and $y$ in $(a)$ are defined as
\begin{align}
x \triangleq [\mathbf{W}_d\mathbf{W}_a\mathbf{G}\mathbf{G}^H\mathbf{W}_a^H\mathbf{W}_d^H]_{kk},
y \triangleq \frac{[\mathbf{W}_d\mathbf{W}_a\mathbf{W}_a^H\mathbf{W}_d^H]_{kk}}{\alpha^2},
\end{align}
$(b)$ uses Jensen's inequality, and $(c)$ applies Lemmas~\ref{lemma3}-\ref{lemma4} in Appendix~A, followed by the Continuous Mapping Theorem~\cite{convergence}.
\end{proof}

Note that the rate bound in \eqref{eq:R_k} is tight for the massive MIMO setup. It can alternatively be regarded as an accurate approximation of the exact rate, $\overline{R}_k$, as proved in \cite[Lemma 1]{alkhateeb2014channel}. Moreover, the previous analysis in \cite{jin2010ergodic} showed that the effects of MIMO can still be reflected even though an analog AF relay with scalar $\alpha$ is utilized. This is also evidenced from our analytical results, e.g., through \eqref{eq:R_k} in the massive MIMO relay network. It indicates that the ergodic sum rate logarithmically increases with respect to the number of antennas at the relay, i.e., $N_r$, which implies an array gain of $N_r$.

Let $\chi_r \triangleq \frac{\xi_kP_u}{\sigma_r^2}$ and $\chi_d \triangleq \frac{\eta P_r}{\sigma_d^2}$ denote the equivalent SNRs at the relay and BS, respectively. We can rewrite \eqref{eq:R_k}
equivalently as
\begin{equation}\label{eq:R_k2}
R_k = \frac{1}{2}\log_2\left(1 + \frac{\frac{\pi}{4}N_d\chi_r\chi_d}{A_k + B_k + 1}\right),
\end{equation}
where
\begin{equation}\label{eq:A_k}
A_k = \left(\frac{\pi}{4}\delta + 1\right)\chi_d,
~B_k = \overline{\xi}_k^{-1}\chi_r,
\end{equation}
where $\overline{\xi}_k = \frac{\xi_k}{\sum_{i=1}^K \xi_k}$ in \eqref{eq:A_k} is the normalized large-scale fading from user $k$ to the relay.

From \eqref{eq:R_k2}, it is observed that $R_k$ logarithmically increases with large $N_d$. The term $\chi_r\chi_d$ in \eqref{eq:R_k2} is the product of the equivalent SNRs at the relay and BS. \emph{The coefficients before $\chi_r\chi_d$, $\frac{\pi}{4}$ and $N_d$, respectively represent the effect of limited RF chains and array gains on the achievable rate.}

Moreover, from \eqref{eq:A_k}, $A_k$ is associated with $\chi_d$. It evaluates the influence of the equivalent SNR at BS on the asymptotic rate. \emph{The factor $\frac{\pi}{4}\delta + 1$ is caused by the hybrid detection} and it increases with $\delta$ causing the rate degradation. $B_k$ is associated with $\chi_r$. It represents the impact of the equivalent SNR at relay on the rate. The coefficient $\overline{\xi}_k^{-1}$ comes from the amplification factor of the relay and it increases with $K$ which also contributes to the rate degradation of each data stream.

More specifically, we then investigate the impact of different equivalent SNRs at the relay and BS on the achievable rate. The following three typical cases are discussed.

1) Case 1: \emph{Low SNR Analysis}. For $\chi_d\ll 1$ and $\chi_r\ll 1$, $R_k$ in \eqref{eq:R_k2} can be expressed as follows:
\begin{equation}\label{eq:Rk3}
R_k^{low} = \frac{1}{2}\log_2\left(1 + \frac{\pi}{4} N_d\chi_r \chi_d\right).
\end{equation}

In the low-SNR scheme, the achievable rate is a function of a scaled product of equivalent SNRs at relay and BS. The SINR of each user is proportional to $\chi_r\chi_d$. The scaling factor equals $\frac{\pi}{4}N_d$ where $N_d$ comes from the array gain and $\frac{\pi}{4}$ is due to the effect of limited RF chains.

2) Case 2: \emph{High SNR Analysis}. For $\chi_d\gg 1$ and $\chi_r\gg 1$, $R_k$ in \eqref{eq:R_k2} can approximately be given by
\begin{equation}\label{eq:Rk4}
R_k^{high} = \frac{1}{2}\log_2\left(1 + \frac{\pi}{\left(\pi \delta \!+\! 4\right)\chi_d \!+\! 4 \overline{\xi}_k^{-1}\chi_r}N_d\chi_r \chi_d\right).
\end{equation}

In the high-SNR scheme, the achievable rate is a nonlinear function of $\chi_r$ and $\chi_d$ due to the relatively complex relationship between SINR and $\chi_r$ and $\chi_d$ under the relay network with limited RF chains. Assuming the same path loss from users to relay, we have $\overline{\xi}_k^{-1} = K$.

3) Case 3: Intuitively, when the SNR at BS is far higher than that at relay, the relay system degenerates into a single-hop one. Specifically for $\chi_d\gg\chi_r$ and $\chi_d\gg 1$, we have $A_k\gg B_k$ and $A_k\gg C_k$. $R_k$ in \eqref{eq:R_k2} is approximately given by
\begin{equation}\label{eq:Rk1}
R_k^{(3)} = \frac{1}{2}\log_2\left(1 + \frac{\pi\delta}{\pi \delta + 4}N_r\chi_r\right).
\end{equation}
From \eqref{eq:Rk1}, $R_k$ is rarely affected by $\chi_d$. The achievable rate only depends on the channel parameter from users to relay. Compared to the achievable rate of the single-hop system using pure digital detection as studied in \cite{ngo2013energy}, \emph{the hybrid processing introduces an SINR reduction by a multiplier factor $\frac{\pi\delta}{\pi \delta + 4}$.} As $\delta\rightarrow\infty$, the ergodic rate achieved by hybrid detection approaches to that achieved by fully digital detection.

4) Case 4: On the other hand, when the SNR at BS is far lower than that at relay, the relay system also degenerates into a single-hop one. Specifically for $\chi_r\gg\chi_d$ and $\chi_r\gg 1$, we have $B_k\gg A_k$ and $B_k\gg C_k$. $R_k$ in \eqref{eq:R_k2} is approximately given by
\begin{equation}\label{eq:Rk2}
R_k^{(4)} = \frac{1}{2}\log_2\left(1 + \frac{\pi}{4}\overline{\xi}_kN_d\chi_d\right).
\end{equation}

\subsection{Power Scaling Law}
In this section, we assume $\xi_k=1$ for all $k$, and normalize $\eta=1$. We focus on a non-decreasing achievable rate when the transmit power of users and/or relay is reduced as the number of antennas increases, i.e., $P_u=\frac{E_u}{N_d^a}$ and $P_r=\frac{E_r}{N_d^b}$ $(a,b\geq0)$ for fixed $E_u$ and $E_r$. Let $\gamma_r\triangleq\frac{E_u}{\sigma_r^2}$ and $\gamma_d\triangleq\frac{E_r}{\sigma_d^2}$ in the following analysis. Rate in \eqref{eq:R_k} simplifies to
%\begin{small}
\begin{align}\label{eq:powerscaling1}
R_k
= \frac{1}{2}\log_2\left(1 \!+\! \frac{\pi \gamma_r \gamma_d}{(\pi \delta + 4)\gamma_d N_d^{a\!-\!1} + 4K\gamma_r N_d^{b-1} + 4N_d^{a+b\!-\!1}}\right).
\end{align}
%\end{small}

In order to obtain a non-vanishing rate with increasing $N_d$, we arrive at the condition $a+b \leq 1$.
To reduce the power consumption as much as possible, we consider $a+b = 1$. Then, a power-saving scheme is obtained where $a$ is an arbitrary nonnegative number satisfying $0\leq a\leq 1$, which can be separately treated in three cases as follows.

1)
As $P_u=E_u/N_d^a, P_r=E_r/N_d^{1-a}~(0<a<1)$, the asymptotic achievable rate is
\begin{align}
\label{eq:iiii}
R_k&\rightarrow\frac{1}{2}\log_2\left(1 + \frac{\pi}{4}\gamma_r \gamma_d\right).
\end{align}
Given $N_d\rightarrow\infty$, \eqref{eq:iiii} is obtained by letting $b=1-a$ in \eqref{eq:powerscaling1}.
We can simultaneously scale down the transmit power of users by $1/N_d^a$ and the relay power by $1/N_d^{1-a}$ while keeping a nearly constant rate. Note that the asymptotic achievable rate in \eqref{eq:iiii} is in fact the rate expression at low SNRs with user power of $\frac{E_u}{N_d^a}$ and relay power of $\frac{E_r}{N_d^{1-a}}$. It is equal to the rate of single-input single-output (SISO) system with user transmit power of $E_u$ and relay forwarding power of $E_r$.

2)
As $P_u=E_u/N_d, P_r=E_r$, the asymptotic achievable rate is given as
\begin{align}
\label{eq:iii}
R_k&\rightarrow\frac{1}{2}\log_2\left(1 + \frac{\pi}{(\pi \delta + 4)\gamma_d+4}\gamma_r \gamma_d\right).
\end{align}
Similarly, \eqref{eq:iii} is obtained by letting $a=1$ and $b=0$ in \eqref{eq:powerscaling1} and taking the limit as $N_d\rightarrow\infty$.
In this case, the transmit power of users can be scaled down by $1/N_d$ while the forwarding power of relay is kept fixed.

3)
As $P_u=E_u, P_r=E_r/N_d$, the asymptotic achievable rate is given as
\begin{align}
\label{eq:ii}
R_k&\rightarrow\frac{1}{2}\log_2\left(1 + \frac{\pi}{4K \gamma_r+4}\gamma_r \gamma_d\right).
\end{align}
Assuming $N_d\rightarrow\infty$, \eqref{eq:ii} is obtained by letting $a=0$ and $b=1$ in \eqref{eq:powerscaling1}.
We can scale down the forwarding power of relay by $1/N_d$ while the transmit power of each user is kept fixed.

\subsection{Effects of Phase Quantization}
In the previous analysis, we have assumed perfect channel phase information available at the analog detector. Further in this subsection, the above performance analysis is extended to a more practical consideration with quantized channel phase information available for the analog processing. The phase of each element of $\hat{\mathbf{W}}_a$ is quantized based on the closest Euclidean distance from a codebook via
\begin{equation}
\Psi=\Big\{0, \pm\frac{1}{2^b}2\pi, \cdots, \pm\frac{2^{b-1}-1}{2^b}2\pi, \pi\Big\},
\end{equation}
where $b$ is the number of quantization bits. We then applies the digital ZF detector $\hat{\mathbf{W}}_d$ on the equivalent channel $\hat{\mathbf{W}}_a\mathbf{GH}$. The presence of phase quantization error changes the obtained results in Proposition~\ref{mypro1} and Lemmas~\ref{lemma2}-\ref{lemma4}.
Correspondingly updated results are derived in Appendix~B.
Using the quantized channel phase information for hybrid detection, a lower bound for the achievable user rate is given in the following theorem.

\begin{mytheorem}\label{theorem_rate_Q}
Under the assumption of large antenna arrays and quantized channel phase information, a lower bound for the ergodic user rate is
\begin{align}
&\hat{R}_k=
\frac{1}{2}\log_2\left(\!\!1 \!+\! \frac{\pi \eta \xi_k N_d P_r P_u{\rm sinc}^2\big(\frac{\pi}{2^b}\!\big)}{\sigma_r^2 \eta P_r (\pi \delta {\rm sinc}^2\big(\frac{\pi}{2^b}\!\big)\!+\! 4) \!+\! 4\sigma_d^2 (\!P_u\! \sum_{i=1}^{K}\xi_i \!+\! \sigma_r^2)}\!\right)\!.
\label{eq:R_k_Q}
\end{align}
\end{mytheorem}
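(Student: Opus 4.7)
The plan is to run through the proof of Theorem~\ref{theorem_rate} essentially verbatim, but with each phase entry of the analog combiner perturbed by an independent quantization error. Write the quantized combiner as $[\hat{\mathbf{W}}_a]_{ij} = \frac{1}{\sqrt{N_d}} e^{j(\phi_{ij} + q_{ij})}$, where the errors $q_{ij}$ are i.i.d.\ uniform on $[-\pi/2^b, \pi/2^b)$ and independent of $\mathbf{G}$ and $\mathbf{H}$. Uniformity of $q_{ij}$ is inherited from the uniformity of $\phi_{ij}$ in Proposition~\ref{mypro2}, together with the fact that the codebook $\Psi$ partitions $[0, 2\pi)$ into equal arcs of width $2\pi/2^{b}$. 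The ingredient that drives the entire extension is the single identity
\begin{equation}
\mathbb{E}\bigl\{e^{j q_{ij}}\bigr\} = {\rm sinc}\!\bigl(\pi/2^{b}\bigr), \qquad \mathbb{E}\bigl\{e^{j(q_{ij}-q_{i'j'})}\bigr\} = {\rm sinc}^2\!\bigl(\pi/2^{b}\bigr) \text{ for } (i,j)\neq(i',j').
\end{equation}

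First, I would rerun the computation in Proposition~\ref{mypro1} with $\hat{\mathbf{W}}_a$ in place of $\mathbf{W}_a$. Splitting the double sum in \eqref{eq:WaGGWa1} exactly as before, the diagonal $i=j$ summands are unaffected because $q_{kl}-q_{kl}=0$, whereas each off-diagonal $i\neq j$ summand factors, by independence of the $q$'s from the channel, into $\mathbb{E}\{\mathbf{g}_i^H \mathbf{g}_j e^{j(\phi_1-\phi_2)}\}$ (known from Theorem~\ref{mytheorem1}) times ${\rm sinc}^2(\pi/2^b)$. This yields the quantized analogue
\begin{equation}
\mathbb{E}\bigl\{[\hat{\mathbf{W}}_a \mathbf{G}\mathbf{G}^H \hat{\mathbf{W}}_a^H]_{kk}\bigr\} = \eta\Bigl(\tfrac{\pi}{4}(N_d-1)\,{\rm sinc}^2(\pi/2^b) + N_r\Bigr),
\end{equation}
which for large $N_d$ behaves like $\eta N_r\bigl(\tfrac{\pi}{4}\delta\,{\rm sinc}^2(\pi/2^b) + 1\bigr)$. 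I would then check that Proposition~\ref{mypro2} still holds: the diagonal entries of $\hat{\mathbf{W}}_a \hat{\mathbf{W}}_a^H$ remain identically $1$, and the off-diagonal entries still converge almost surely to $0$ by the LLN, since $e^{j(\phi_{pl}+q_{pl})}$ remains zero-mean conditioned on $\mathbf{g}_l$ (a uniform random variable shifted by an independent uniform modulo $2\pi$ is still zero-mean).

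Next, I would propagate these updated expectations through the quantized versions of Lemmas~\ref{lemma3}--\ref{lemma4} in Appendix~A, which supply $\mathbb{E}\{x\}$ and $\mathbb{E}\{y\}$ in the proof of Theorem~\ref{theorem_rate}. The useful-signal term in the numerator comes from the effective diagonal of $\hat{\mathbf{W}}_a \mathbf{GH}$; each of its $N_d$ summands carries a factor $e^{jq_{kl}}|[\mathbf{GH}]_{lk}|$, and when squared in magnitude and expectation-averaged the cross quantization errors between distinct summands contribute ${\rm sinc}^2(\pi/2^b)$, giving one factor of ${\rm sinc}^2(\pi/2^b)$ in the numerator of \eqref{eq:R_k_Q}. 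The $\pi\delta$ term inside the first bracket of the denominator originates directly from the off-diagonal $i\neq j$ case of the quantized Proposition~\ref{mypro1} and therefore also acquires ${\rm sinc}^2(\pi/2^b)$. The constant $4$ in $(\pi\delta\,{\rm sinc}^2+4)$ comes from the $i=j$ diagonal terms and is untouched, while the $4\sigma_d^2(P_u\sum_i \xi_i + \sigma_r^2)$ block descends from $\alpha^{-2}$, which depends only on $\mathbf{H}$ and so does not interact with the analog combiner at all.

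Finally, substituting these expectations into the chain of inequalities in the proof of Theorem~\ref{theorem_rate}---Jensen's inequality applied to $\tfrac{1}{2}\mathbb{E}\{\log_2(1+P_u/(\sigma_r^2 x + \sigma_d^2 y))\}$, followed by the Continuous Mapping Theorem---yields \eqref{eq:R_k_Q}. The main obstacle is purely one of bookkeeping: one must carefully separate the off-diagonal $i\neq j$ and diagonal $i=j$ contributions throughout Proposition~\ref{mypro1} and Lemmas~\ref{lemma3}--\ref{lemma4}, so that each factor of ${\rm sinc}^2(\pi/2^b)$ lands on exactly the correct term, while simultaneously verifying that the $\alpha^{-2}$-derived contribution carries no quantization factor at all.
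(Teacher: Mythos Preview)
Your proposal is correct and follows essentially the same route as the paper: Appendix~B reruns Theorem~\ref{mytheorem1}, Propositions~\ref{mypro1}--\ref{mypro2}, and Lemmas~\ref{lemma2}--\ref{lemma4} with the quantized combiner, isolating the single new ingredient $\mathbb{E}\{e^{-j\epsilon}\}={\rm sinc}(\pi/2^b)$ at the step analogous to \eqref{eq:gi1}, and then feeds the resulting $\mathbb{E}\{\hat x\},\mathbb{E}\{\hat y\}$ into the same Jensen/Continuous-Mapping chain used for Theorem~\ref{theorem_rate}. One small tightening worth making: the errors $q_{ij}$ are deterministic functions of $\phi_{ij}$ and hence not literally independent of $(\mathbf{G},\mathbf{H})$; what actually justifies your factorizations (and is all you use) is that the amplitude and phase of a circular Gaussian are independent, so $|g_{i1}|$ is independent of its quantization error, and that $q_{ki},q_{kj}$ are conditionally independent given $\mathbf{h}_k$ for $i\neq j$.
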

\begin{proof}
%See Appendix B.
By substituting \eqref{eq:SINR1} into \eqref{eq:rate1} and replacing $\mathbf{W}_a$ and $\mathbf{W}_d$ with $\hat{\mathbf{W}}_a$ and $\hat{\mathbf{W}}_d$ respectively, we have
\begin{align}
%\label{eq:Jensen}
\nonumber
\overline{R}_k
&\overset{(a)}{=}\frac{1}{2}\mathbb{E}_{x,y}\left\{\log_2\left(1 + \frac{P_u}{\sigma_r^2 \hat{x} + \sigma_d^2 \hat{y}}\right)\right\}\\\nonumber
&\overset{(b)}\geq \frac{1}{2}\log_2\left(1 + \frac{P_u}{\sigma_r^2 \mathbb{E}\{\hat{x}\} + \sigma_d^2 \mathbb{E}\{\hat{y}\}}\right)\\
&\overset{(c)}{\rightarrow} \frac{1}{2}\log_2\left(\!1+ \!
\! \frac{\pi \eta \xi_k N_d P_r P_u{\rm sinc}^2\big(\frac{\pi}{2^b}\big)}{\sigma_r^2 \eta P_r (\pi \delta {\rm sinc}^2\big(\frac{\pi}{2^b}\big)\!+\! 4) \!+\! 4\sigma_d^2 (P_u \sum_{i=1}^{K}\xi_i \!+\! \sigma_r^2)}\!\right),
%&\triangleq  R_k,
\end{align}
where the RVs $\hat{x}$ and $\hat{y}$ in $(a)$ are defined as
\begin{align}
\hat{x} &\triangleq [\hat{\mathbf{W}}_d\hat{\mathbf{W}}_a\mathbf{G}\mathbf{G}^H\hat{\mathbf{W}}_a^H\hat{\mathbf{W}}_d^H]_{kk},
\hat{y} \triangleq \frac{[\hat{\mathbf{W}}_d\hat{\mathbf{W}}_a\hat{\mathbf{W}}_a^H\hat{\mathbf{W}}_d^H]_{kk}}{\alpha^2},
\end{align}
$(b)$ uses Jensen's inequality, and $(c)$ applies \eqref{eq:lemma2_Q}--\eqref{eq:lemma3_Q} in Appendix~B, followed by applying the Continuous Mapping Theorem \cite{convergence}.
%The lower bound in \eqref{eq:R_k_Q} is finally obtained.
\end{proof}

\section{Power Allocation}
In the previous analysis, we assume the same transmit power for each user which in general is not optimal when it comes to maximizing the sum rate. In this section, we assume $P_T$ as the total transmit power budget of all users, and denote $P_{u,k}$ as the transmit power of the $k$th user, say $P_{u,k}=\mu_k P_T$ where $\mu_k$ satisfies $\sum_{k=1}^K \mu_k =1$ and $\mu_k \geq 0$.
%In this section, we replace $P_u$ in \eqref{eq:R_k} with $\mu_k P_T$.
Assume that $P_T,P_r,N_d$, and $\delta$ are fixed, which means users and relay are working with fixed transmit powers and antenna configurations at the relay and BS.
By replacing $P_u$ in \eqref{eq:R_k} with $\mu_k P_T$, we now resort to finding the optimal PA strategy, namely the optimal combination of $\mu_k$, to maximize the sum rate.

It is obvious that the sum rate has the form as
\begin{equation}\label{eq:PA1}
R\triangleq\sum_{k=1}^{K}R_k=\frac{1}{2}\sum_{k=1}^K\log_2\left(1+\frac{l_k\mu_k}{\sum_{i=1}^K m_i\mu_i + n}\right),
\end{equation}
where $l_k,m_i$, and $n$ are constant values defined by
\begin{align}\label{eq:PA2}
\begin{cases}
l_k=\pi\eta N_d P_r P_T \xi_k\\
m_i=4\sigma_d^2 P_T \xi_i\\
n=\sigma_r^2 \eta P_r(\pi\delta+4) + 4\sigma_d^2\sigma_r^2.
\end{cases}
\end{align}
Hence, the PA problem can be written as
\begin{align}\label{eq:PA}
&\max_{\mu_1,\mu_2,\cdots,\mu_K} \sum_{k=1}^K \ln\left(1 + \frac{l_k\mu_k}{\sum_i m_i\mu_i + n}\right),\\
%\end{align}
%\begin{align}
\nonumber
&{\rm s.t.}~~
\sum_{k=1}^K \mu_k=1,~
\mu_k\geq0, k=1,2,\cdots,K.
\end{align}

The optimization problem above is in general complicate and it is infeasible to get the optimal solution directly owing to the nonlinear relationship between the ergodic rate and the PA factors. By introducing an auxiliary constraint as $\sum_i m_i\mu_i = \rho$, we obtain an equivalent problem as
\begin{align}
&~~~~~~~~~~~~~\max_{\mu_1,\cdots,\mu_K,\rho} \sum_{k=1}^K \ln\left(1 + \frac{l_k\mu_k}{\rho + n}\right),
%\end{equation}
%\begin{equation}
\\\nonumber
&{\rm s.t.}~~
\sum_{k=1}^K \mu_k=1,~\sum_{i=1}^K m_i\mu_i=\rho,~
\mu_k\geq0, k=1,\cdots,K.
\end{align}

To facilitate the analysis, we equivalently transform the equality constraints into inequality ones, which is easily verified that it does not change the optimality of the solution. In the following, KKT analysis is used to solve the optimal PA given by the following theorem.

\begin{mytheorem}\label{theorem3}
Consider the optimization problem
\begin{align}\label{eq:opt pro}
&~~~~~~~~~~~\min_{\mu_1,\cdots,\mu_K,\rho} -\sum_{k=1}^K \ln\left(1 + \frac{l_k\mu_k}{\rho + n}\right),
\\\nonumber
&{\rm s.t.}~~
\sum_{k=1}^K \mu_k\leq1,~
\sum_{i=1}^K m_i\mu_i\leq\rho,~
\mu_k\geq0, k=1,\cdots,K.
\end{align}
The optimal solution is given as
\begin{align}\label{eq:opt sol}
\mu_k=\max\left\{\frac{1}{v+wm_k}-\frac{\rho+n}{l_k},0\right\},
\end{align}
where $v$ and $w$ are chosen by satisfying $\sum_{k=1}^K \mu_k=1$ and $\sum_{i=1}^K m_i\mu_i=\rho$, respectively.
\end{mytheorem}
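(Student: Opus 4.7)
My plan is to attack the problem via a standard Lagrangian/KKT analysis, using the auxiliary variable $\rho$ together with its inequality constraint to decouple the per-user variables: once $\rho$ is fixed, the dependence of the objective on each $\mu_k$ is separable and convex, so KKT conditions are both necessary and sufficient for the inner minimization, and the outer dependence on $\rho$ is then pinned down by forcing the auxiliary constraint to hold with equality.

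The first step is to form the Lagrangian
\[
L = -\sum_{k=1}^{K} \ln\!\Big(1 + \tfrac{l_k \mu_k}{\rho + n}\Big) + v\Big(\sum_{k} \mu_k - 1\Big) + w\Big(\sum_{i} m_i \mu_i - \rho\Big) - \sum_{k} \lambda_k \mu_k,
\]
with multipliers $v, w, \lambda_k \geq 0$, and write the stationarity condition in $\mu_k$:
\[
-\frac{l_k}{\rho + n + l_k \mu_k} + v + w m_k - \lambda_k = 0.
\]
Using complementary slackness $\lambda_k \mu_k = 0$, for any active user ($\mu_k > 0$) I would set $\lambda_k = 0$ and solve to obtain $\mu_k = 1/(v + w m_k) - (\rho + n)/l_k$; for any inactive user ($\mu_k = 0$), the residual inequality $\lambda_k \geq 0$ is equivalent to $(\rho + n)/l_k \geq 1/(v + w m_k)$, i.e., the same expression being nonpositive. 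Merging the two regimes yields exactly $\mu_k = \max\{1/(v + w m_k) - (\rho + n)/l_k,\, 0\}$, with $v$ and $w$ then fixed by the two equality conditions stated in the theorem.

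The main difficulty will be justifying that the auxiliary constraint $\sum_{i} m_i \mu_i \leq \rho$ is tight at the optimum so the reformulated program is genuinely equivalent to the original sum-rate maximization, and that the resulting KKT stationary point is the global minimizer rather than merely a stationary one. The first part I would handle by observing that $-\ln(1 + l_k \mu_k/(\rho + n))$ is strictly increasing in $\rho$, hence any feasible point with $\sum_{i} m_i \mu_i < \rho$ can be strictly improved by lowering $\rho$; a similar monotonicity argument under uniform scaling of $\mu$ shows $\sum_{k} \mu_k = 1$ is also tight. For the second part, because the objective is separable and convex in $\mu$ once $\rho$ is held fixed, KKT is sufficient on the inner problem, while the outer scalar variable $\rho$ is pinned down unambiguously by the active equality $\sum_{i} m_i \mu_i = \rho$, closing the loop. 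The water-filling formula \eqref{eq:opt sol} then drops out of the two cases identified above.
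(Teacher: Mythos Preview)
Your proposal is correct and follows essentially the same Lagrangian/KKT route as the paper: form the Lagrangian with multipliers on the three constraint families, read off the stationarity condition $-\tfrac{l_k}{\rho+n+l_k\mu_k}+v+wm_k-\lambda_k=0$, and use complementary slackness to merge the active/inactive cases into the single $\max\{\cdot,0\}$ formula. The only minor difference is in how tightness of the two inequality constraints is argued: the paper proceeds by a case split on $(v,w)\in\{(0,+),(+,0),(+,+)\}$ and, in each case, shows by contradiction that a strictly slack constraint would admit a strictly better feasible point; you instead invoke direct monotonicity of the objective in $\rho$ and in a uniform scaling of $\mu$, which is a cleaner way to reach the same conclusion and also lets you comment on sufficiency via separable convexity, a point the paper leaves implicit.
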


\begin{proof}
Firstly, we define the Lagrangian of problem \eqref{eq:opt pro} as
\begin{align}
&L(\mu_1,\cdots,\mu_K,u_1,\cdots,u_K,v,w)\nonumber\\
=&\!-\!\sum_k \ln\left(1 \!+\! \frac{l_k\mu_k}{\rho \!+\! n}\right)\!-\!\sum_k u_k\mu_k\!+\!v\left(\sum_k \mu_k\!-\!1\right)
+\!w\left(\sum_i m_i\mu_i\!-\!\rho\right).
\end{align}
Then, the KKT conditions are given by
\begin{align}\label{eq:KKT 1}
\begin{cases}
-\frac{l_k}{\rho+n+l_k\mu_k}-u_k+v+wm_k=0\\
v(\!\sum_k \mu_k\!-\!1\!)=0,~w(\!\sum_i m_i\mu_i\!-\!\rho\!)\!=\!0,~u_k\mu_k\!=\!0\\
\sum_k \mu_k\leq1,~\sum_i m_i\mu_i\leq\rho,~\mu_k\geq0\\
u_k\geq0,~v\geq0,~w\geq0.
\end{cases}
\end{align}
Through further analysis, we have
\begin{align}
&\mu_k=\max\left\{\frac{1}{v+wm_k}-\frac{\rho+n}{l_k},0\right\},\\\nonumber
&{\rm s.t.}~
\begin{cases}
v(\sum_k \mu_k\!-\!1)=0,~w(\sum_i m_i\mu_i\!-\!\rho)=0\\
\sum_k \mu_k\leq1,~\sum_i m_i\mu_i\leq\rho\\
v\geq0~,w\geq0.
\end{cases}
\end{align}

The fact that $v$ and $w$ cannot be zeros simultaneously lead us to consider the following cases:

1)~For $v=0$ and $w>0$, it yields
\begin{equation}\label{eq:casea}
\mu_k\!=\!\max\left\{\!\frac{1}{wm_k}\!-\!\frac{\rho\!+\!n}{l_k},0\!\right\}\!,~{\rm s.t.}~\sum_k \mu_k\!\leq\!1,~\sum_i m_i\mu_i\!=\!\rho.
\end{equation}

Assume that $\mu_k'$ is the optimal solution satisfying \eqref{eq:casea} where $\sum_k \mu_k'<1$ while $\mu_k$ is another solution conforming to \eqref{eq:casea} where $\sum_k \mu_k=1$. Noticeably, there exists some $k$ satisfying $\mu_k'<\mu_k$ and other $k$ satisfying $\mu_k'\leq\mu_k$. Let $R_0'$ and $R_0$ be the values of objective function under $\mu_k'$ and $\mu_k$, respectively. We have $R_0'<R_0$ yielding a conflict. Hence $\mu_k$ corresponding to the maximum sum rate objective must satisfy $\sum_k \mu_k=1$.

2)~For $v>0$ and $w=0$, it yields
\begin{equation}
\mu_k\!=\!\max\left\{\frac{1}{v}\!-\!\frac{\rho\!+\!n}{l_k},0\right\},~{\rm s.t.}~\sum_k \mu_k\!=\!1,~\sum_i m_i\mu_i\!\leq\!\rho.
\end{equation}
Similar as in case 1), the optimal solution is acquired when $\sum_i m_i\mu_i=\rho$.

3)~For $v>0$ and $w>0$, it yields
%\begin{small}
\begin{equation}
\mu_k\!=\!\max\left\{\frac{1}{v\!+\!wm_k}\!-\!\frac{\rho\!+\!n}{l_k},0\right\},~{\rm s.t.}~\sum_k \mu_k\!=\!1,~\sum_i m_i\mu_i\!=\!\rho.
\end{equation}
%\end{small}

Finally, the proof completes by combining the above three cases.
\end{proof}

From Theorem \ref{theorem3}, we can solve $\mu_k$ through, e.g., a three-dimensional search for $\rho$, $v$ and $w$ with the closed-form solution in \eqref{eq:opt sol}. Moreover, the optimal PA patterns under the high-SNR and low-SNR schemes are analyzed in the following corollaries from Theorem \ref{theorem3}.

\begin{mycor}\label{corollary1}
As $P_T$ tends to 0, the optimal solution in Theorem \ref{theorem3} is given as
\begin{align}
\lim\limits_{P_T\rightarrow0}\mu_k=\max\left\{c_1-\frac{c_2}{\xi_k},0\right\},
\end{align}
where $c_1=\frac{1}{v}$ and $c_2=\frac{\rho + \sigma_r^2 \eta P_r (\pi\delta+4) + 4\sigma_d^2\sigma_r^2}{\pi \eta N_d P_r P_T}$.
\end{mycor}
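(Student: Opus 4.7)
The plan is to start from the closed-form solution $\mu_k = \max\{1/(v + wm_k) - (\rho+n)/l_k, 0\}$ derived in Theorem \ref{theorem3}, substitute the explicit expressions for $l_k$, $m_k$, and $n$ from \eqref{eq:PA2}, and then track which terms survive or vanish in the asymptotic regime $P_T \to 0$. The key observation is that only $l_k$ and $m_k$ depend on $P_T$ (both linearly), while $n$ is $P_T$-independent and $\rho$ is treated as a free parameter.

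First I would handle the subtractive term. Substituting $l_k = \pi \eta N_d P_r P_T \xi_k$ gives $(\rho + n)/l_k = [\rho + \sigma_r^2 \eta P_r (\pi\delta+4) + 4\sigma_d^2 \sigma_r^2]/(\pi \eta N_d P_r P_T \xi_k) = c_2/\xi_k$, which matches the statement directly by simple factoring and requires no limit argument beyond the substitution itself.

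Next I would analyze the first term $1/(v + wm_k)$. Since $m_k = 4\sigma_d^2 P_T \xi_k$ is explicitly proportional to $P_T$, we have $w m_k = 4 w \sigma_d^2 P_T \xi_k \to 0$ as $P_T \to 0$, provided the Lagrange multiplier $w$ does not blow up faster than $1/P_T$. Granting this, $1/(v + wm_k) \to 1/v = c_1$, yielding the stated form $\mu_k \to \max\{c_1 - c_2/\xi_k, 0\}$.

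The main obstacle is justifying that $w$ remains bounded in the low-power limit. I would argue this from the KKT conditions in \eqref{eq:KKT 1}: the constraint $\sum_i m_i \mu_i \leq \rho$ has coefficients $m_i \propto P_T$ that are vanishing, so it becomes increasingly slack for any fixed $\rho > 0$, which forces its Lagrange multiplier $w$ to stay bounded (indeed, in the limiting problem one may set $w = 0$ and recover Case 1 from the proof of Theorem \ref{theorem3}, where only the simplex constraint is active and $v$ alone enforces feasibility). With $wm_k \to 0$ thus secured, the corollary follows.
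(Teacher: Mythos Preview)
Your proposal is correct and follows essentially the same route as the paper: substitute the expressions from \eqref{eq:PA2} into the closed-form solution \eqref{eq:opt sol}, observe that $m_k \propto P_T \to 0$ so that $1/(v+wm_k) \to 1/v$, and rewrite $(\rho+n)/l_k$ as $c_2/\xi_k$. In fact your treatment is slightly more careful than the paper's own proof, which simply passes the limit inside without discussing the boundedness of $w$; your KKT-slackness argument for why $w$ does not blow up is a reasonable addition, though the paper does not supply (or apparently feel the need for) any such justification.
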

\begin{proof}
Taking the limit of \eqref{eq:opt sol}, we have
\begin{align}
\lim\limits_{P_T\rightarrow0}\mu_k
=&\max\left\{\lim\limits_{P_T\rightarrow0}\left(\frac{1}{v+wm_k}-\frac{\rho+n}{l_k}\right),0\right\}
=\max\left\{\!\frac{1}{v}\!-\!\frac{\rho \!+\! \sigma_r^2 \eta P_r (\pi\delta\!+\!4) \!+\! 4\sigma_d^2\sigma_r^2}{\pi \eta N_d P_r P_T \xi_k},0\!\right\}.
\end{align}
\end{proof}

From Corollary \ref{corollary1}, in the low-SNR regime, $\mu_k$ is monotonically increasing with $\xi_k$. \emph{Users with better channel condition are assigned with more power in order to achieve the maximal sum rate.} The smaller $P_T$ is, the PA factors for difference users differs more significantly from each other.

\begin{mycor}\label{corollary2}
As $P_T$ tends to infinity, the optimal solution in Theorem \ref{theorem3} is given as
\begin{align}\label{eq:c2}
\lim\limits_{P_T\rightarrow\infty}\mu_k=\max\left\{\frac{c_3}{\xi_k},0\right\},
\end{align}
where $c_3=\frac{1}{4 \sigma_d^2 w P_T} - \frac{\rho + \sigma_r^2 \eta P_r (\pi\delta+4) + 4\sigma_d^2\sigma_r^2}{\pi\eta N_d P_r P_T}$.
\end{mycor}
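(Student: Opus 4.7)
My plan is to mimic the structure of the proof of Corollary~\ref{corollary1}: start from the closed-form solution \eqref{eq:opt sol} of Theorem~\ref{theorem3}, substitute the explicit definitions of $l_k$, $m_k$, and $n$ from \eqref{eq:PA2}, and then pass to the limit $P_T \to \infty$, while carefully tracking which terms dominate inside the argument of the $\max$.

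Concretely, substituting $l_k = \pi\eta N_d P_r P_T \xi_k$ and $m_k = 4\sigma_d^2 P_T \xi_k$ into \eqref{eq:opt sol} turns the interior expression into
$$\frac{1}{v + 4w\sigma_d^2 P_T \xi_k} \;-\; \frac{\rho + n}{\pi\eta N_d P_r P_T \xi_k}.$$
Factoring $1/\xi_k$ out of the first term by rewriting it as $\frac{1}{\xi_k\bigl(v/\xi_k + 4w\sigma_d^2 P_T\bigr)}$, and observing that the constant $v/\xi_k$ is negligible against $4w\sigma_d^2 P_T$ as $P_T \to \infty$, the first term tends to $\frac{1}{4w\sigma_d^2 P_T \xi_k}$. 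Combining this with the already-factored second term yields exactly $\max\{c_3/\xi_k,\,0\}$ with $c_3$ as stated, since $\max$ is continuous and commutes with the limit.

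The main subtlety, and where I would need to be most careful, is justifying that $w$ remains strictly positive as $P_T \to \infty$, so that the dominant-term argument is legitimate. Intuitively this should follow from the KKT case analysis in Theorem~\ref{theorem3}: at high $P_T$ the relay-side noise amplification, captured by the $\sum_i m_i \mu_i$ term in the SINR denominator of \eqref{eq:PA1}, becomes the binding effect, so the auxiliary constraint $\sum_i m_i \mu_i = \rho$ is active and $w > 0$ by complementary slackness. This is the high-SNR counterpart of the mechanism that makes $v > 0$ active in Corollary~\ref{corollary1}. Once this is in hand, the rest is a routine limit computation. I would also remark on the qualitative consequence, in parallel with the remark after Corollary~\ref{corollary1}: since the limiting $\mu_k$ is proportional to $1/\xi_k$, at high total power \emph{users with weaker channels are assigned more power}, exactly opposite to the low-SNR allocation pattern.
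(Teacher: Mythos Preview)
Your proposal is correct and follows essentially the same approach as the paper: the paper's proof consists of the single line ``the proof is completed by taking the limit of \eqref{eq:opt sol} followed by the similar derivations in Corollary~1,'' and you have simply spelled out those derivations in detail. Your additional care about the positivity of $w$ and the qualitative interpretation go beyond what the paper writes but are consistent with it.
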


\begin{proof}
The proof is completed by taking the limit of \eqref{eq:opt sol} followed by the similar derivations in Corollary 1.
\end{proof}

From Corollary \ref{corollary2}, in the high-SNR scheme, when $c_3>0$, $\mu_k$ is monotonically decreasing with $\xi_k$. \emph{Users with better channel condition are assigned with less power.} Similarly as Corollary \ref{corollary1}, for extremely large $P_T\rightarrow\infty$, the PA factors for difference users tend to be equal.

It is interesting to see that the optimal PA strategy varies fundamentally for different SNR values by comparing Corollary \ref{corollary1} and Corollary \ref{corollary2}. Specifically, the optimal PA allocates more power to users with better channel condition at low SNR regime, similarly as the water-filling like power control; while the optimal PA strategy behaves in the opposite direction at the high SNR regime which is contrary to the water-filling like strategy.
When $P_T$ tends to 0, the amplification factor tends to infinity so that the impact of the additional noise at BS is negligible. In this case, the relay system degenerates into a single-hop one without the amplification effect of the relay. Hence, the optimal PA conforms to the classic water-filling like solution. When $P_T$ tends to infinity, the amplification factor tends to 0, which makes the multiuser interference introduced by the amplification effect rather pronounced.
Hence, a water-filling like solution could enlarge the interference, which makes the sum rate decrease.
From an SINR perspective, when $P_T$ tends to infinity, we can observe the change in the sum rate of a simplified system with $K = 2$ upon a small variation on $\mu_k$.
The sum rate in \eqref{eq:PA1} changes to
\begin{align}
\label{r1_5_sumrate4}
%\nonumber
\tilde{R} \triangleq & \frac{1}{2}\log\Big(1+\frac{c\xi_1(\mu_1+\Delta\mu)}{\xi_1(\mu_1+\Delta\mu)+\xi_2(\mu_2-\Delta\mu)}\Big)%\\
\nonumber
+ \frac{1}{2}\log\Big(1+\frac{c\xi_2(\mu_2-\Delta\mu)}{\xi_1(\mu_1+\Delta\mu)+\xi_2(\mu_2-\Delta\mu)}\Big)\\\nonumber
\triangleq& \frac{1}{2}\log\Big(1+{\rm SINR}_1'\Big) + \frac{1}{2}\log\Big(1+{\rm SINR}_2'\Big)\\
\triangleq& \tilde{R}_{1}+\tilde{R}_{2},
\end{align}
where $c$ is a constant expressed as $c \triangleq \frac{\pi\eta P_rN_d}{4\sigma^2_d}.$ Applying Taylor expression, $\tilde{R}_{k}~(k\in\{1,2\})$ and $\tilde{R}$ in \eqref{r1_5_sumrate4} are written as
\begin{align}
\label{r1_5_R21}
%\nonumber
\tilde{R}_{k}
&= \frac{1}{2}\log\Big(1+{\rm SINR}_k\Big)%\\
\!+\!\frac{(-1)^{k+1}}{2} \Big(\frac{1}{1\!+\!{\rm SINR}_k'}\frac{\partial({\rm SINR}_k')}{\partial(\Delta\mu)}\Big)\Big|_{\Delta\mu=0}\Delta\mu \!+\! o(\Delta\mu),
\end{align}
\begin{align}
\label{r1_5_R2}
\tilde{R}
=R+b\Delta\mu+o(\Delta\mu),
\end{align}
where ${\rm SINR}_k\triangleq\frac{c\xi_k\mu_k}{\xi_1\mu_1+\xi_2\mu_2}$ and $b$ is a constant defined as
\begin{align}
\label{r1_5_b}
%\nonumber
b=&\frac{c^2\xi_1\xi_2(\mu_1+\mu_2)}{(\xi_1\mu_1+\xi_2\mu_2)(\xi_1\mu_1+\xi_2\mu_2+c\xi_1\mu_1)}%\\
-
\frac{c^2\xi_1\xi_2(\mu_1+\mu_2)}{(\xi_1\mu_1+\xi_2\mu_2)(\xi_1\mu_1+\xi_2\mu_2+c\xi_2\mu_2)}.
\end{align}
Assume that $\xi_1>\xi_2$ and $\mu_1=\mu_2$. If we consider a water-filling like solution, i.e., $\Delta \mu>0$, we have $b<0$ and thus $\tilde{R}<R$, which indicates that our proposed optimal PA outperforms the water-filling like solution.

Then, combining Corollaries \ref{corollary1} and \ref{corollary2} with growing $P_T$, we can increase the fairness of the relay system by allocating a higher proportion of total transmit power to the users with poor channel condition. Hence, there is a tradeoff between the total power consumption and the system fairness when comes to sum rate maximization.

\section{Simulation Results}

In this section, Monte-Carlo simulations are adopted to evaluate the sum rate of the massive MIMO relay network with hybrid detection at the BS. We set normalized $\sigma_r^2 = \sigma_d^2 = 0$~dB and $\eta = 1$.

\begin{figure}
\centering
\includegraphics[width=3.6in]{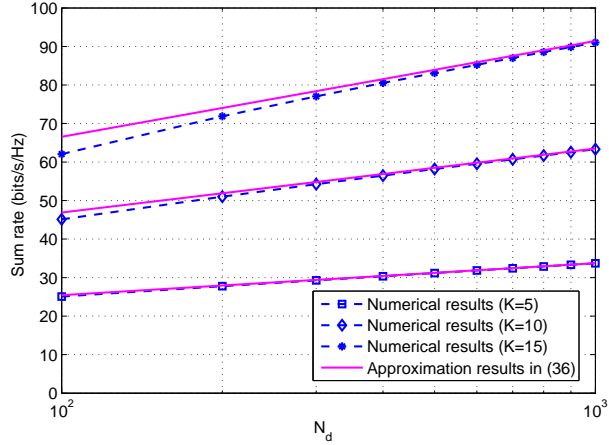}
\caption{Sum rate versus number of antennas with $P_u = P_r$ = 20dB, $\delta=1$, and $\xi_k = 1$ for $k=1, \cdots, K$.}
\label{Fig:sum_rate_comparison_simulation_approximation}
\end{figure}
\begin{figure}
\centering
\includegraphics[width=3.6in]{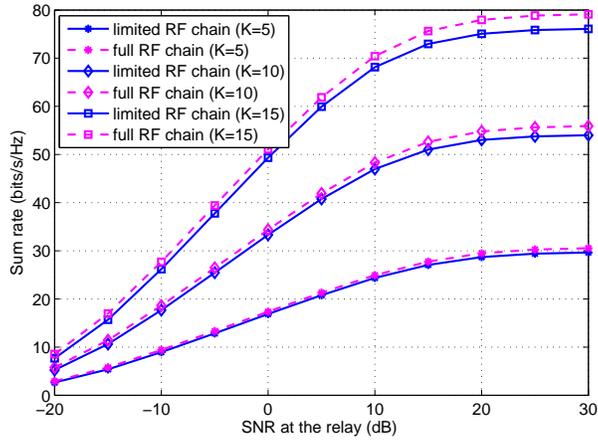}
\caption{Sum rate comparison with full RF chains where $P_r$ = 20dB, $N_d = N_r = 256$, and $\xi_k = 1$ for $k=1, \cdots, K$.}
\label{Fig:limited_full_comparison}
\end{figure}
\begin{figure}
\centering
\includegraphics[width=3.6in]{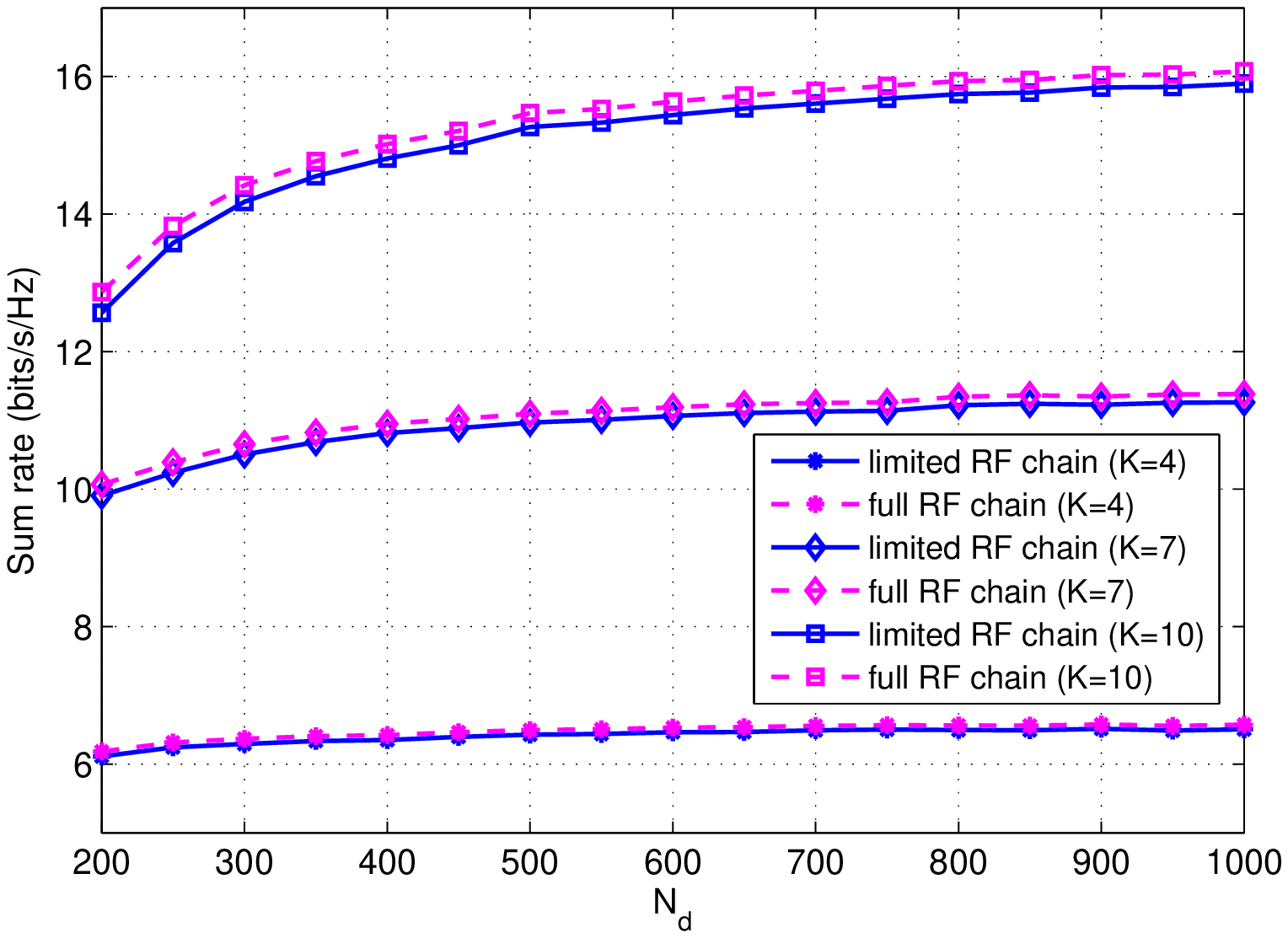}
\caption{Sum rate comparing with full RF chains where $P_u = E_u/N_d$, $P_r = E_u = 20$dB, $\delta=10$, and $\xi_k = 1$.}
\label{Fig:sum_rate_limited_full_RF_chains}
\end{figure}

Fig. \ref{Fig:sum_rate_comparison_simulation_approximation} shows the comparison of sum rate between simulation and analytical results. It shows that our analytical results can be a relatively accurate estimation. Besides, it can be obtained that the sum rate increases logarithmically with $N_d$ which is consistent with \eqref{eq:R_k2}.

Fig. \ref{Fig:limited_full_comparison} shows the achievable sum rate versus SNR at the relay. As the SNR at the relay becomes large, the sum rate, which is limited by the effect of noise amplification at the relay, converges to a constant. In addition, the hybrid detection of the limited-RF-chain case performs rather close to the ZF detection of the full-RF-chain case within a marginal gap.

Fig. \ref{Fig:sum_rate_limited_full_RF_chains} demonstrates the power scaling law of users derived in \eqref{eq:iii}. It implicates that user power can be scaled down by $1/N_d$ while keeping a nearly unchanged rate as $N_d\rightarrow\infty$. Comparing the sum rate of this system with that of the full-RF-chain case, we can see that the proposed hybrid detection causes negligible rate loss in contrast to fully digital detection.

\begin{figure}
\centering
\includegraphics[width=3.6in]{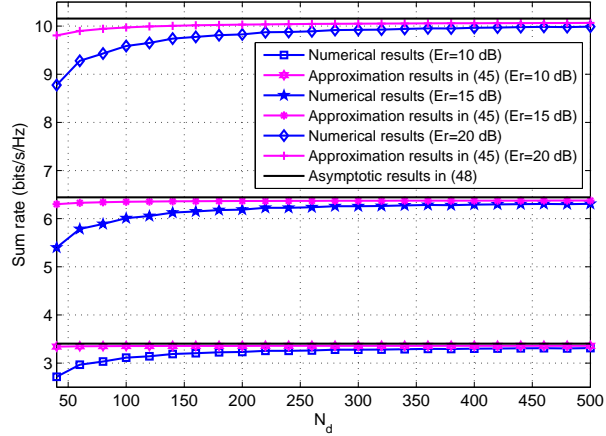}
\caption{Sum rate versus numbers of antennas with $P_u = E_u = 10$dB, $P_r = E_r/N_d$, $K = 5$, $\delta=1$, and $\xi_k = 1$.}
\label{Fig:power_scaling_law_relay}
\end{figure}

\iffalse
\begin{figure}
\centering
\includegraphics[width=3.6in]{PaperFig/power_allocation.eps}
\caption{Sum rate comparison between equal PA and optimal PA with $P_r = 10$dB, $K = 5$, and $N_r = N_d = 128$.}
\label{Fig:power_allocation}
\end{figure}
\fi

Fig. \ref{Fig:power_scaling_law_relay} demonstrates the power scaling law of relay derived in \eqref{eq:ii}. Our analytical results match well with the sum rate as the number of antennas goes larger. It implicates that the relay power can be scaled down by an factor of $1/N_d$ while keeping a nearly unchanged rate as $N_d\rightarrow\infty$. In addition, we can increase the value of $E_r$ when a larger sum rate is required.

%Fig. \ref{Fig:power_allocation} shows the comparison of the sum rate versus SNR at the relay between equal PA and optimal PA. The optimal PA factors are obtained for various SNRs at the relay by applying three-dimensional search as expressed in \textbf{Theorem \ref{theorem3}}. It shows that the sum rate under optimal PA is larger than that under equal PA. The gap between them is negligible for moderate SNR and the gap becomes larger for higher or lower SNRs.

\begin{figure}
\centering
\includegraphics[width=3.6in]{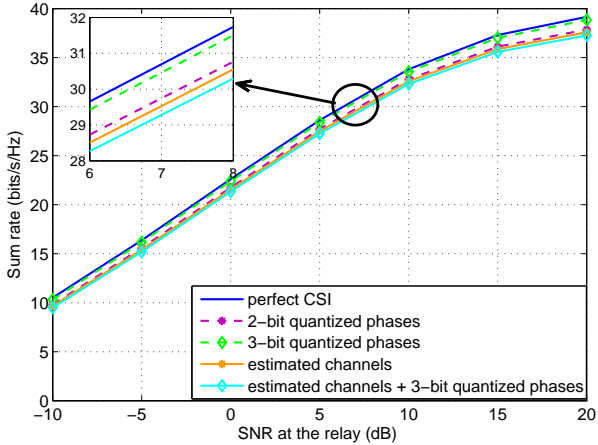}
\caption{Sum rate versus SNR under estimated cascade channels and quantized channel phase information, where $N_d=N_r=128$, $K=8$, and $P_r=20$~dB.}
\label{Fig:Sumrate_quantization_estimation}
\end{figure}

Fig. \ref{Fig:Sumrate_quantization_estimation} shows the impact of imperfect CSI including channel estimation error and quantized channel phase information.
Very recent studies including \cite{LPan2018}-\cite{YDing2018} proposed methods to realize channel estimation in massive MIMO with limited RF chains, even though there are still no tractable models for evaluating the channel estimation error for this case.
Therefore, we resort to Monte-Carlo simulations to evaluate the impact of imperfect CSI.
In Fig.~\ref{Fig:Sumrate_quantization_estimation}, it shows that the sum rate with 3-bit quantized phase shifters approaches that with ideal (continuous) phase shifters. When low-precision, e.g., 3-bit, quantized phase shifters and channel estimation error are considered, the rate loss is shown marginal compared to perfect CSI.

\begin{figure}
\centering
\includegraphics[width=3.7in]{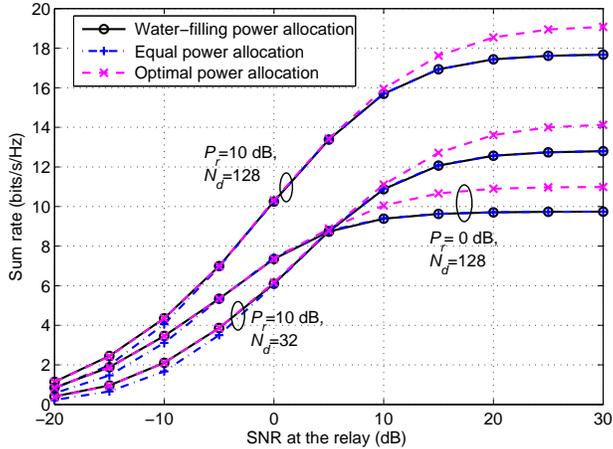}
\caption{ Sum rate comparison among water-filling PA, equal PA, and the proposed optimal PA with $K = 5$ and $N_r = N_d$.}
\label{Fig:power_allocation}
\end{figure}

\begin{figure}
\centering
\includegraphics[width=3.7in]{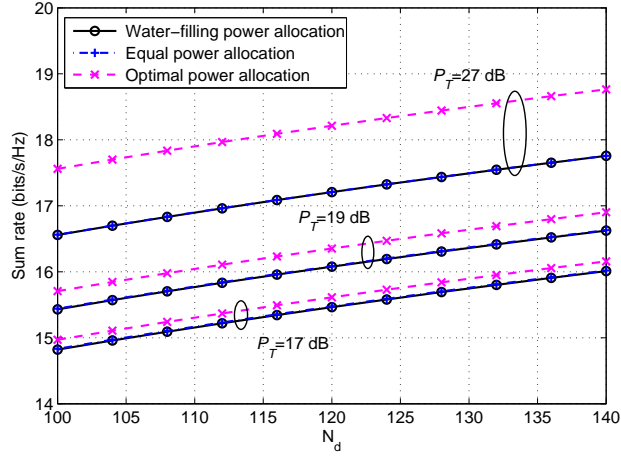}
\caption{Sum rate comparison among water-filling PA, equal PA, and the proposed optimal PA with $K = 5$, $P_r = 10$ dB, and $N_r = N_d$.}
\label{Fig:power_allocation_Nd}
\end{figure}

Fig. \ref{Fig:power_allocation} shows the comparison of the sum rate versus SNR with the proposed optimal PA and existing schemes including the water-filling PA and equal PA.
The optimal PA factors are obtained for various SNRs at the relay by applying three-dimensional search as expressed in Theorem \ref{theorem3}.
These schemes are compared under various system setups, i.e., \{$P_r=10$ dB, $N_d=128$\}, \{$P_r=0$ dB, $N_d=128$\}, and \{$P_r=10$ dB, $N_d=32$\}.
In each case, the proposed optimal PA achieves the best sum rate compared with both water-filling PA and equal PA.
At low SNRs, the optimal PA achieves almost the same sum rate as water-filling PA in each case.
This verifies our observation that the proposed optimal PA plays analogously as water-filling PA at low SNR regime in Section V.
While for high SNRs, the optimal PA noticeably outperforms the water-filling PA.
On the other hand, the gap between the optimal PA and equal PA is negligible for moderate SNR and becomes larger for higher or lower SNRs.
Fig. \ref{Fig:power_allocation_Nd} shows the comparison versus the number of antennas at the BS.
Three cases are tested with $P_T=17$~dB, $P_T=19$ dB, and $P_T=27$ dB.
As expected, the optimal PA outperforms both water-filling PA and equal PA in terms of sum rate.
Furthermore, the comparison shows that the sum rate gap between the optimal PA and the two existing schemes becomes more obvious with an increasing $P_T$.
This implies that the advantage provided by the optimal PA increases with the total transmit power of users.

\iffalse
\begin{figure}
\centering
\includegraphics[width=3.6in]{PaperFig/power_allocation.eps}
\caption{Sum rate comparison between equal PA and optimal PA with $P_r = 10$dB, $K = 5$, and $N_r = N_d = 128$.}
\label{Fig:power_allocation}
\end{figure}

Fig. \ref{Fig:power_allocation} shows the comparison of the sum rate versus SNR at the relay between equal PA and optimal PA. The optimal PA factors are obtained for various SNRs at the relay by applying three-dimensional search as expressed in Theorem \ref{theorem3}. It shows that the sum rate under optimal PA is larger than that under equal PA. The gap between them is negligible for moderate SNR and the gap becomes larger for higher or lower SNRs.

\begin{figure}
\centering
\includegraphics[width=3.6in]{Sumrate_differentK.eps}
\caption{Sum rate comparison between two schemes, i.e., turning off the extra $(N_{\rm RF}-K)$ RF chains and using all RF chains, where $N_{\rm RF} = 16$.}
\label{Fig:Sumrate_differentK}
\end{figure}

In Fig. \ref{Fig:Sumrate_differentK}, under the general assumption that $N_{\rm RF}\geq K$, two schemes, i.e., using all $N_{\rm RF}$ RF chains with pseudoinverse-based ZF detector and turning off the extra $(N_{\rm RF}- K)$ RF chains, are compared in terms of the achievable sum rate. It shows that the sum rates under the two schemes are almost the same and verifies that our scheme under $N_{\rm RF}=K$ is effective and can be generalized to the case that $N_{\rm RF}>K$ by just turning off the extra RF chains.
\fi

\section{Conclusion}\label{sec:conclusion}
In this paper, we analyzed the massive MIMO relay system with limited RF chain at BS. The analytical expression of sum rate was obtained in closed form which is shown to be logarithmically proportional to $N_d$.
%We characterize the rate behaviors under different equivalent SNR combinations at relay and BS.
Power scaling laws were derived when the number of antennas tends to infinity. In general, users and relay can get an energy efficiency gain of $N^a_d$ and $N^{1-a}_d (0 < a < 1)$ simultaneously.
The optimal PA problem was solved by obtaining the closed form solutions with insightful engineering observations.

\appendices
%\section{Proof of Lemma 1}

\section{Useful Lemmas for the Proof of Theorem~\ref{theorem_rate}}

\begin{mylemma}\label{lemma2}
As $N_d\rightarrow \infty$, the matrix $\sqrt{N_rN_d}\mathbf{W}_d$ converges to the diagonal matrix as follows
%\begin{small}
\begin{equation}\label{eq:NrNdWd1}
\sqrt{N_rN_d}\mathbf{W}_d \!\xrightarrow{a.s.}\! {\rm diag}\left(\frac{2}{\sqrt{\pi \eta \xi_1}},\cdots, \frac{2}{\sqrt{\pi \eta \xi_k}},\cdots, \frac{2}{\sqrt{\pi \eta \xi_K}}\right).
\end{equation}
%\end{small}
\end{mylemma}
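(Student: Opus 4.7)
\textbf{Proof proposal for Lemma~\ref{lemma2}.}

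The plan is to analyze $\mathbf{W}_a \mathbf{G}\mathbf{H}$ directly, show that after the normalization $\frac{1}{\sqrt{N_rN_d}}$ it converges almost surely to a diagonal matrix $\mathbf{D} = \mathrm{diag}\!\left(\tfrac{\sqrt{\pi\eta\xi_k}}{2}\right)$, and then invoke the continuous mapping theorem on $\mathbf{X}\mapsto\mathbf{X}^{-1}$ (which is continuous in a neighborhood of any invertible matrix) to conclude that $\sqrt{N_rN_d}\,\mathbf{W}_d = \sqrt{N_rN_d}\,(\mathbf{W}_a\mathbf{G}\mathbf{H})^{-1}\xrightarrow{a.s.}\mathbf{D}^{-1}$, which is exactly the claimed limit.

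First, I would expand the $(k,j)$th entry as in \eqref{eq:WaGGWa1}, obtaining
\[
[\mathbf{W}_a\mathbf{G}\mathbf{H}]_{kj}=\frac{1}{\sqrt{N_d}}\sum_{i=1}^{N_d}\frac{\mathbf{h}_k^H\mathbf{g}_i}{|\mathbf{h}_k^H\mathbf{g}_i|}\,\mathbf{g}_i^H\mathbf{h}_j .
\]
For the diagonal case $k=j$ this collapses to $\tfrac{1}{\sqrt{N_d}}\sum_i |\mathbf{g}_i^H\mathbf{h}_k|$. Conditioned on $\mathbf{h}_k$, the summands are i.i.d.\ Rayleigh with mean $\tfrac{\sqrt{\pi\eta}}{2}\|\mathbf{h}_k\|$, so the LLN over $N_d$ gives $\tfrac{1}{N_d}\sum_i|\mathbf{g}_i^H\mathbf{h}_k|\xrightarrow{a.s.}\tfrac{\sqrt{\pi\eta}}{2}\|\mathbf{h}_k\|$. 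Combined with the standard concentration $\|\mathbf{h}_k\|/\sqrt{N_r}\xrightarrow{a.s.}\sqrt{\xi_k}$ (from $\|\mathbf{h}_k\|^2/\xi_k\sim$ Gamma$(N_r,1)$), one gets $\tfrac{1}{\sqrt{N_rN_d}}[\mathbf{W}_a\mathbf{G}\mathbf{H}]_{kk}\xrightarrow{a.s.}\tfrac{\sqrt{\pi\eta\xi_k}}{2}$.

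The main obstacle is the off-diagonal case $k\ne j$, because the unit-phase factor $e^{j\phi_{ki}} = \mathbf{h}_k^H\mathbf{g}_i/|\mathbf{h}_k^H\mathbf{g}_i|$ depends on $\mathbf{g}_i$ and cannot be pulled out naively. I would resolve this by conditioning on both $\mathbf{h}_k$ and $\mathbf{h}_j$ and performing the orthogonal decomposition $\mathbf{h}_j=\alpha\mathbf{h}_k+\mathbf{h}_{\perp}$ with $\alpha=\mathbf{h}_k^H\mathbf{h}_j/\|\mathbf{h}_k\|^2$ and $\mathbf{h}_\perp\perp\mathbf{h}_k$. Then
\[
\frac{\mathbf{h}_k^H\mathbf{g}_i}{|\mathbf{h}_k^H\mathbf{g}_i|}\,\mathbf{g}_i^H\mathbf{h}_j = \alpha\,|\mathbf{h}_k^H\mathbf{g}_i| + e^{j\phi_{ki}}\,\mathbf{g}_i^H\mathbf{h}_\perp .
\]
By the circular symmetry of $\mathbf{g}_i$, the component of $\mathbf{g}_i$ along $\mathbf{h}_k$ (which determines $\phi_{ki}$ and $|\mathbf{h}_k^H\mathbf{g}_i|$) is independent of its component along $\mathbf{h}_\perp$, so the conditional mean of the second term vanishes, giving $\mathbb{E}\{[\mathbf{W}_a\mathbf{G}\mathbf{H}]_{kj}\mid\mathbf{h}_k,\mathbf{h}_j\}\xrightarrow{a.s.}\tfrac{\sqrt{\pi\eta}}{2}\cdot\tfrac{\mathbf{h}_k^H\mathbf{h}_j}{\|\mathbf{h}_k\|}\cdot\sqrt{N_d}$ after applying the conditional LLN over $N_d$. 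Dividing by $\sqrt{N_rN_d}$ and using $\mathbf{h}_k^H\mathbf{h}_j/N_r\xrightarrow{a.s.}0$ and $\|\mathbf{h}_k\|/\sqrt{N_r}\xrightarrow{a.s.}\sqrt{\xi_k}$ shows the off-diagonal entries vanish in the limit.

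Putting the diagonal and off-diagonal analyses together establishes $\tfrac{1}{\sqrt{N_rN_d}}\mathbf{W}_a\mathbf{G}\mathbf{H}\xrightarrow{a.s.}\mathbf{D}$, after which matrix inversion (which is continuous at the invertible limit $\mathbf{D}$) yields \eqref{eq:NrNdWd1}. The trickiest step is rigorously justifying the interchange of the almost-sure limits over $N_d$ (inner LLN conditional on $\mathbf{H}$) and $N_r$ (concentration of $\|\mathbf{h}_k\|$ and decorrelation of $\mathbf{h}_k^H\mathbf{h}_j$); this can be handled by treating the two limits as a joint limit along the regime $N_d=\delta N_r\to\infty$ and bounding the residual terms uniformly, which is what I expect to be the most delicate part of the argument.
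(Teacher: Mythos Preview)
Your proposal is correct and reaches the same conclusion via the continuous mapping theorem for matrix inversion, but the route differs from the paper's in two respects. For the diagonal entries, the paper first invokes a multivariate CLT to argue that $\tfrac{1}{\sqrt{N_r}}[\mathbf{GH}]_{ij}$ is asymptotically $\mathcal{CN}(0,\eta\xi_j)$ and then uses the Rayleigh mean of its modulus; you instead condition on $\mathbf{h}_k$, use the \emph{exact} Rayleigh distribution of $|\mathbf{g}_i^H\mathbf{h}_k|$ with mean $\tfrac{\sqrt{\pi\eta}}{2}\|\mathbf{h}_k\|$, apply the LLN over $N_d$, and finish with the concentration $\|\mathbf{h}_k\|/\sqrt{N_r}\to\sqrt{\xi_k}$. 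This bypasses the CLT step entirely. For the off-diagonal entries, the paper simply asserts that $[\mathbf{W}_a]_{ki}$ and $[\mathbf{H}_{eq}]_{ij}$ are independent, which is not literally true for finite $N_r$ since both depend on $\mathbf{g}_i$; your orthogonal decomposition $\mathbf{h}_j=\alpha\mathbf{h}_k+\mathbf{h}_\perp$ makes this dependence explicit, isolates the nonzero conditional-mean term $\alpha\,|\mathbf{g}_i^H\mathbf{h}_k|$, and shows it vanishes only after dividing by $\sqrt{N_r}$ via $\mathbf{h}_k^H\mathbf{h}_j/N_r\to 0$. In that sense your off-diagonal argument is actually sharper than the paper's. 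The residual concern you flag---the interchange of the $N_d$-LLN and the $N_r$-concentration---is equally present in the paper's proof and is handled there at the same informal level, so your treatment is no less complete.
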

\begin{proof}
Defining $\mathbf{H}_{eq}\triangleq \mathbf{GH}$ and $|h_{ij}|\triangleq \frac{1}{\sqrt{N_r}}\big|[\mathbf{GH}]_{ij}\big|$, we can write
\begin{equation}\label{eq:Heq2}
\frac{1}{\sqrt{N_r}}[\mathbf{H}_{eq}]_{ij} = \frac{1}{\sqrt{N_r}}[\mathbf{GH}]_{ij} = |h_{ij}|e^{-j\phi_{ji}}.
\end{equation}
Decompose the following matrix as
\begin{equation}\label{eq:WaHeq1}
\frac{1}{\sqrt{N_rN_d}}\mathbf{W}_a\mathbf{H}_{eq} = \mathbf{D}_w + \mathbf{A}_w,
\end{equation}
where $\mathbf{D}_w = {\rm diag}(d_{11},\cdots, d_{kk},\cdots, d_{KK})$ is a diagonal matrix and $[\mathbf{A}_w]_{kj} = d_{kj}~(k\neq j)$ while the diagonal entries of $\mathbf{A}_w$ are all zeros. Then we evaluate $d_{kk}$ and $d_{kj}$, separately.

For $d_{kk}$, we have
\begin{align}\label{eq:dkk1}
d_{kk} \overset{(a)}{=}\frac{1}{N_d}\sum_{i=1}^{N_d}|h_{ik}|
\overset{(b)}{\rightarrow}\mathbb{E}\{|h_{ik}|\},
\end{align}
where $(a)$ is obtained by using \eqref{eq:Fa} and \eqref{eq:Heq2}, and $(b)$ is derived by using the Law of Large Numbers because $|h_{ik}|~(\forall i\in\{1,2,\cdots,N_d\})$ are i.i.d. RVs
for any given $k$.

For $d_{kj}~(k\neq j)$, we have
\begin{align}\label{eq:dkj1}
d_{kj} \overset{(a)}{\rightarrow}\mathbb{E}\left\{\sqrt{\frac{N_d}{N_r}}\left[\mathbf{W}_a\right]_{ki}\left[\mathbf{H}_{eq}\right]_{ij}\right\}
\overset{(b)}{=}0,
\end{align}
where $(a)$ is derived by using the Law of Large Numbers because $\sqrt{\frac{N_d}{N_r}}\left[\mathbf{W}_a\right]_{ki}\left[\mathbf{H}_{eq}\right]_{ij}$ are i.i.d. RVs given $k,j$, and $(b)$ is obtained by using the independence of $\left[\mathbf{W}_a\right]_{ki}$ and $\left[\mathbf{H}_{eq}\right]_{ij}$ given $i$.

Combining \eqref{eq:dkk1} and \eqref{eq:dkj1}, we have \begin{align}
\frac{1}{\sqrt{N_rN_d}}\mathbf{W}_a\mathbf{H}_{eq} \xrightarrow{a.s.}
{\rm diag}(d_{11},\cdots, d_{kk},\cdots, d_{KK}).
\end{align}

However, it is difficult to calculate $d_{kk}$ due to the unknown distribution of $|h_{ik}|$ under a finite $N_d$. Fortunately, as $N_d\rightarrow\infty$, we can obtain an explicit closed-form value of $\mathbb{E}\{|h_{ik}|\}$.

Firstly, rewrite \eqref{eq:Heq2} as
\begin{align}
\frac{1}{\sqrt{N_r}}[\mathbf{H}_{eq}]_{ij}\!=\!\frac{1}{\sqrt{N_r}}\sum_{k=1}^{N_r}[\mathbf{G}]_{ik}[\mathbf{H}]_{kj}
\!\overset{(a)}{=}\!\frac{1}{\sqrt{N_r}}\sum_{k=1}^{N_r}(a_{ik}\!+\!jb_{ik})(c_{kj}\!+\!jd_{kj})
\!\overset{(b)}{=}\!\sum_{k=1}^{N_r}(e_{k}\!+\!jf_{k})
\!\overset{(c)}{\rightarrow}\! e'\!+\!jf',
\end{align}
where $(a)$ is obtained by writing $[\mathbf{G}]_{ik}\triangleq a_{ik}+jb_{ik}$ and $[\mathbf{H}]_{kj}\triangleq c_{kj}+jd_{kj}$, $(b)$ is obtained by rewriting $e_{k}\triangleq \frac{1}{\sqrt{N_r}}(a_{ik}c_{kj}-b_{ik}d_{kj})$ and $f_{k}\triangleq \frac{1}{\sqrt{N_r}}(b_{ik}c_{kj}+a_{ik}d_{kj})$, and  $(c)$ is obtained by defining $e'\triangleq \lim\limits_{N_r\rightarrow\infty}\sum_{k=1}^{N_r}e_{k}$ and $f'\triangleq \lim\limits_{N_r\rightarrow\infty}\sum_{k=1}^{N_r}f_{k}$. Using the distribution of $[\mathbf{G}]_{ik}$ and $[\mathbf{H}]_{kj}$ and the independence of $a_{ik},b_{ik},c_{kj},d_{kj}$, we can easily know that $e_k$ and $f_k$ both have mean 0 and variance $\frac{\eta\xi_j}{2 N_r}$. Applying the Central Limit Theorem, we know that as $N_r\rightarrow\infty$, the RV $e$ and $f$ both tend to follow a normal distribution with mean 0 and variable $\frac{1}{2}\eta\xi_j$, that is, $e'\sim\mathcal{N}(0, \frac{1}{2}\eta\xi_j)$ and $f'\sim\mathcal{N}(0, \frac{1}{2}\eta\xi_j)$. Define a random vector $\mathbf{m}_{ij}$ as follows:
\begin{equation}
\mathbf{m}_{ij}=\sum_{k=1}^{N_r}[e_{k}~f_{k}]^T.
\end{equation}
According to the Multidimensional Central Limit Theorem given by \cite{hahn1981the}, as $N_r\rightarrow\infty$, $\mathbf{m}_{ij}$ converges to a 2-dimensional normal distribution, that is, the RVs $e'$ and $f'$ are jointly Gaussian. Since the covariance of $e'$ and $f'$ is easily calculated to be zero, $e'$ and $f'$ are uncorrelated. Having proved that $e'$ and $f'$ are jointly Gaussian, we therefore have the independence of $e'$ and $f'$. Consequently, as $N_r\rightarrow\infty$, $\frac{1}{\sqrt{N_r}}[\mathbf{H}_{eq}]_{ik}$ is asymptotically a complex normal RV, i.e.,
\begin{equation}\label{eq:NrGH}
\frac{1}{\sqrt{N_r}}[\mathbf{GH}]_{ij}\sim\mathcal{CN}(0, \eta\xi_j).
\end{equation}
From \eqref{eq:NrGH}, $|h_{ik}|$ follows Rayleigh distribution with mean $\sqrt{\pi\eta\xi_k}/2$ for $N_d\rightarrow\infty$. Therefore,
\begin{align}\label{eq:WaHeqa.s.1}
\frac{\mathbf{W}_a\mathbf{H}_{eq}}{\sqrt{N_rN_d}} \xrightarrow{a.s.}
{\rm diag}\left(\frac{\sqrt{\pi\eta\xi_1}}{2}, \cdots, \frac{\sqrt{\pi\eta\xi_k}}{2}, \cdots, \frac{\sqrt{\pi\eta\xi_K}}{2}\right).
\end{align}
By applying the Continuous Mapping Theorem \cite{convergence} and using \eqref{eq:WaHeqa.s.1} and \eqref{eq:dig_detector}, we can obtain \eqref{eq:NrNdWd1}.
\end{proof}

\begin{mylemma}\label{lemma3}
As $N_d\rightarrow \infty$, we have
\begin{equation}
N_d\mathbb{E}\left\{\left[\mathbf{W}_d \mathbf{W}_a \mathbf{G} \mathbf{G}^H \mathbf{W}_a^H \mathbf{W}_d^H\right]_{kk}\right\} \xrightarrow{a.s.} \frac{\pi \delta + 4}{\pi \xi_k}.
\end{equation}
\end{mylemma}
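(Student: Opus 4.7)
The plan is to combine Lemma~\ref{lemma2} with Proposition~\ref{mypro1} and then take $N_d,N_r\to\infty$ with $\delta=N_d/N_r$ held fixed. The key observation is that Lemma~\ref{lemma2} says $\sqrt{N_rN_d}\,\mathbf{W}_d$ converges almost surely to the \emph{diagonal} matrix $\mathbf{D}={\rm diag}\bigl(\tfrac{2}{\sqrt{\pi\eta\xi_1}},\cdots,\tfrac{2}{\sqrt{\pi\eta\xi_K}}\bigr)$. So the $k$-th diagonal entry of interest can be unpacked as
\begin{align*}
N_rN_d\left[\mathbf{W}_d \mathbf{W}_a \mathbf{G}\mathbf{G}^H \mathbf{W}_a^H \mathbf{W}_d^H\right]_{kk}
&= \sum_{p,q}\bigl(\sqrt{N_rN_d}[\mathbf{W}_d]_{kp}\bigr)\bigl(\sqrt{N_rN_d}[\mathbf{W}_d]_{kq}\bigr)^{\!*}\left[\mathbf{W}_a\mathbf{G}\mathbf{G}^H\mathbf{W}_a^H\right]_{pq},
\end{align*}
and since the off-diagonal entries of $\sqrt{N_rN_d}\,\mathbf{W}_d$ vanish a.s., only the $p=q=k$ term survives, giving
\begin{equation*}
N_rN_d\left[\mathbf{W}_d \mathbf{W}_a \mathbf{G}\mathbf{G}^H \mathbf{W}_a^H \mathbf{W}_d^H\right]_{kk}
\xrightarrow{a.s.} |[\mathbf{D}]_{kk}|^2\left[\mathbf{W}_a\mathbf{G}\mathbf{G}^H\mathbf{W}_a^H\right]_{kk}
= \frac{4}{\pi\eta\xi_k}\left[\mathbf{W}_a\mathbf{G}\mathbf{G}^H\mathbf{W}_a^H\right]_{kk}
\end{equation*}
by the Continuous Mapping Theorem.

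Next, I would take expectation of both sides and invoke Proposition~\ref{mypro1}, which evaluates $\mathbb{E}\{[\mathbf{W}_a\mathbf{G}\mathbf{G}^H\mathbf{W}_a^H]_{kk}\}=\eta(\tfrac{\pi}{4}N_d+N_r-\tfrac{\pi}{4})$ in closed form. Therefore
\begin{equation*}
N_rN_d\,\mathbb{E}\!\left\{\!\left[\mathbf{W}_d \mathbf{W}_a \mathbf{G}\mathbf{G}^H \mathbf{W}_a^H \mathbf{W}_d^H\right]_{kk}\!\right\}
\to \frac{4}{\pi\xi_k}\!\left(\frac{\pi}{4}N_d+N_r-\frac{\pi}{4}\right),
\end{equation*}
and multiplying both sides by $1/N_r$ yields
\begin{equation*}
N_d\,\mathbb{E}\!\left\{\!\left[\mathbf{W}_d \mathbf{W}_a \mathbf{G}\mathbf{G}^H \mathbf{W}_a^H \mathbf{W}_d^H\right]_{kk}\!\right\}
\to \frac{N_d}{\xi_k N_r}+\frac{4}{\pi\xi_k}-\frac{1}{\pi\xi_k N_r}
\to \frac{\delta}{\xi_k}+\frac{4}{\pi\xi_k}=\frac{\pi\delta+4}{\pi\xi_k},
\end{equation*}
as $N_d\to\infty$ with $\delta=N_d/N_r$ fixed, which is the claimed limit.

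The main technical obstacle is the interchange of limit and expectation: the a.s.\ convergence from Lemma~\ref{lemma2} does not by itself license moving the expectation inside. I would address this either by establishing uniform integrability of the (non-negative) random variable $N_rN_d[\mathbf{W}_d\mathbf{W}_a\mathbf{G}\mathbf{G}^H\mathbf{W}_a^H\mathbf{W}_d^H]_{kk}$, or more pragmatically by following the style used earlier in the paper and treating Lemma~\ref{lemma2} as asymptotic equivalence so that the deterministic equivalent $\frac{4}{\pi\eta\xi_k N_r N_d}[\mathbf{W}_a\mathbf{G}\mathbf{G}^H\mathbf{W}_a^H]_{kk}$ can be substituted inside the expectation in the large-$N_d$ regime. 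A secondary subtlety is controlling the cross terms from the off-diagonal entries of $\mathbf{W}_d$; this should be handled by noting that in Lemma~\ref{lemma2} those entries were shown to vanish at the $1/\sqrt{N_rN_d}$ scale, so their contribution to the quadratic form is of strictly lower order than the diagonal contribution once we apply the $O(N_d)$ growth of $[\mathbf{W}_a\mathbf{G}\mathbf{G}^H\mathbf{W}_a^H]_{kk}$ from Proposition~\ref{mypro1}.
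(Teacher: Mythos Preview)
Your proposal is correct and follows essentially the same route as the paper: apply Lemma~\ref{lemma2} (with the Continuous Mapping Theorem) to replace $\sqrt{N_rN_d}\,\mathbf{W}_d$ by the diagonal matrix $\mathbf{D}$, then invoke Proposition~\ref{mypro1} to evaluate $\mathbb{E}\{[\mathbf{W}_a\mathbf{G}\mathbf{G}^H\mathbf{W}_a^H]_{kk}\}$, and finally divide by $N_r$ and let $N_d\to\infty$ with $\delta=N_d/N_r$ fixed. The paper's proof is terser and does not explicitly discuss the interchange of limit and expectation or the control of off-diagonal cross terms that you flag; your more careful treatment of these points is a welcome addition but does not change the underlying argument.
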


\begin{proof}
Based on Proposition~\ref{mypro1} and Lemma \ref{lemma2}, we have
\begin{align}
\nonumber
N_d\mathbb{E}\left\{\left[\mathbf{W}_d \mathbf{W}_a \mathbf{G} \mathbf{G}^H \mathbf{W}_a^H \mathbf{W}_d^H\right]_{kk}\right\}
\overset{(a)}{\rightarrow}& \frac{1}{N_r}\mathbb{E}\left\{\frac{2}{\sqrt{\pi \eta \xi_k}} \left[\mathbf{W}_a \mathbf{G} \mathbf{G}^H \mathbf{W}_a^H\right]_{kk} \frac{2}{\sqrt{\pi \eta \xi_k}}\right\}\\\nonumber
\overset{(b)}{=} & \frac{\pi N_d + 4N_r - \pi}{\pi \xi_k N_r}\\
\overset{(c)}{\rightarrow} & \frac{\pi \delta + 4}{\pi \xi_k},
\end{align}
where $(a)$ follows from Lemma \ref{lemma2} and the Continuous Mapping Theorem \cite{convergence}, $(b)$ applies Proposition~\ref{mypro1}, and $(c)$ is obtained by using $N_d/N_r \triangleq\delta$ and $N_d\rightarrow\infty$.
\end{proof}

\begin{mylemma}\label{lemma4}
As $N_d, N_r\rightarrow \infty$, we have
\begin{equation}
N_d\mathbb{E}\left\{\frac{\left[\mathbf{W}_d \mathbf{W}_a \mathbf{W}_a^H \mathbf{W}_d^H\right]_{kk}}{\alpha^2}\right\} \xrightarrow{a.s.} \frac{4P_u \sum_{i=1}^{K}\xi_i + 4\sigma_r^2}{\pi \eta \xi_k P_r}.
\end{equation}
\end{mylemma}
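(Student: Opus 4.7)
The plan is to decompose the quantity of interest into three factors and take their asymptotic limits separately, then recombine via the Continuous Mapping Theorem. Writing out the definition of $\alpha$ from \eqref{eq:alpha}, we have
\begin{equation}
\frac{1}{\alpha^2} = \frac{P_u \,\mathrm{Tr}(\mathbf{H}^H\mathbf{H}) + \sigma_r^2 N_r}{P_r},
\end{equation}
so the first step is to apply the law of large numbers to $\mathrm{Tr}(\mathbf{H}^H\mathbf{H}) = \sum_{i=1}^{K}\|\mathbf{h}_i\|^2$. Since $\|\mathbf{h}_i\|^2/N_r \xrightarrow{a.s.} \xi_i$ as $N_r \to \infty$, this gives $\mathrm{Tr}(\mathbf{H}^H\mathbf{H})/N_r \xrightarrow{a.s.} \sum_{i=1}^{K}\xi_i$, hence $1/(\alpha^2 N_r) \xrightarrow{a.s.} (P_u \sum_{i=1}^{K}\xi_i + \sigma_r^2)/P_r$.

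The second step handles the matrix factor. By Proposition~\ref{mypro2}, $\mathbf{W}_a\mathbf{W}_a^H \xrightarrow{a.s.} \mathbf{I}_K$, so the $k$th diagonal entry asymptotically collapses to $[\mathbf{W}_d\mathbf{W}_d^H]_{kk}$. Applying Lemma~\ref{lemma2}, which gives $\sqrt{N_rN_d}\,\mathbf{W}_d \xrightarrow{a.s.} \mathrm{diag}\bigl(2/\sqrt{\pi\eta\xi_1},\ldots,2/\sqrt{\pi\eta\xi_K}\bigr)$, I then get
\begin{equation}
N_r N_d \bigl[\mathbf{W}_d \mathbf{W}_a \mathbf{W}_a^H \mathbf{W}_d^H\bigr]_{kk} \xrightarrow{a.s.} \frac{4}{\pi\eta\xi_k},
\end{equation}
which is simply the Continuous Mapping Theorem applied to matrix product and diagonal extraction.

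The third step is to combine these two limits. Multiplying gives
\begin{equation}
N_d \cdot \frac{[\mathbf{W}_d \mathbf{W}_a \mathbf{W}_a^H \mathbf{W}_d^H]_{kk}}{\alpha^2}
= \underbrace{N_r N_d\,[\mathbf{W}_d \mathbf{W}_a \mathbf{W}_a^H \mathbf{W}_d^H]_{kk}}_{\to \,4/(\pi\eta\xi_k)} \cdot \underbrace{\frac{1}{\alpha^2 N_r}}_{\to\,(P_u\sum_i\xi_i+\sigma_r^2)/P_r},
\end{equation}
and another application of the Continuous Mapping Theorem yields the almost-sure limit $\frac{4(P_u\sum_{i=1}^{K}\xi_i + \sigma_r^2)}{\pi\eta\xi_k P_r}$, which matches the claimed right-hand side.

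The main obstacle is passing from almost-sure convergence to convergence of expectations: strictly speaking one needs a uniform-integrability argument, since the small singular values of $\mathbf{W}_a\mathbf{GH}$ (which appear in $\mathbf{W}_d = (\mathbf{W}_a\mathbf{GH})^{-1}$) could in principle cause the expectation to blow up even when the random variable converges. I would handle this in the same spirit as the authors handle Lemma~\ref{lemma3}, namely invoke the already-established concentration of $\mathbf{W}_a\mathbf{GH}/\sqrt{N_rN_d}$ to a deterministic nonsingular diagonal matrix (via \eqref{eq:WaHeqa.s.1}), which keeps the smallest singular value bounded away from zero with overwhelming probability, and appeal to the same dominated-convergence style reasoning implicit elsewhere in the paper, so that $\mathbb{E}\{\cdot\}$ and the almost-sure limit may be interchanged.
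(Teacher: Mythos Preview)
Your proposal is correct and follows essentially the same approach as the paper: both proofs factor the expression, apply the law of large numbers to $\mathrm{Tr}(\mathbf{H}^H\mathbf{H})/N_r$, invoke Proposition~\ref{mypro2} for $\mathbf{W}_a\mathbf{W}_a^H\to\mathbf{I}_K$ and Lemma~\ref{lemma2} for the asymptotic diagonal form of $\sqrt{N_rN_d}\,\mathbf{W}_d$, and then combine via the Continuous Mapping Theorem. Your explicit remark about uniform integrability when exchanging the expectation with the almost-sure limit is a point the paper leaves implicit.
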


\begin{proof}
Based on Proposition~\ref{mypro2} and Lemma \ref{lemma2}, we have
\begin{align}
\nonumber
N_d\mathbb{E}\left\{\frac{\left[\mathbf{W}_d \mathbf{W}_a \mathbf{W}_a^H \mathbf{W}_d^H\right]_{kk}}{\alpha^2}\right\}
\overset{(a)}{=}&\frac{N_d}{P_r}\mathbb{E}\left\{\left(P_u \text{Tr}(\mathbf{H}^H\mathbf{H}) + \sigma_r^2 N_r\right)\left[\mathbf{W}_d \mathbf{W}_a \mathbf{W}_a^H \mathbf{W}_d^H\right]_{kk}\right\}\\\nonumber
\overset{(b)}{\rightarrow} & \frac{N_rN_d \left(P_u \sum_{i=1}^{K}\xi_i + \sigma_r^2 \right)}{P_r}\mathbb{E}\left\{\left[\mathbf{W}_d \mathbf{W}_a \mathbf{W}_a^H \mathbf{W}_d^H\right]_{kk}\right\}\\\nonumber
\overset{(c)}{\rightarrow} & \frac{4P_u \sum_{i=1}^{K}\xi_i + 4\sigma_r^2}{\pi \eta \xi_k P_r}\mathbb{E}\left\{\left[\mathbf{W}_a \mathbf{W}_a^H \right]_{kk}\right\}\\
\overset{(d)}{\rightarrow} & \frac{4P_u \sum_{i=1}^{K}\xi_i + 4\sigma_r^2}{\pi \eta \xi_k P_r},
\end{align}
where $(a)$ is obtained by using \eqref{eq:alpha} and $(b)$ is derived by applying the Law of Large Numbers that $\frac{1}{N_r}$\text{Tr}$(\mathbf{H}^H\mathbf{H}) \xrightarrow{a.s.} \sum_{i=1}^{K}\xi_i$ when $N_r\rightarrow\infty$. Employing Lemma \ref{lemma2} and Proposition~\ref{mypro2} respectively, followed by utilizing the Continuous Mapping Theorem \cite{convergence}, we obtain $(c)$ and $(d)$.
\end{proof}

\iffalse
\section{Proof of Theorem \ref{theorem_rate}}
By using Lemmas~\ref{lemma3}-\ref{lemma4} in Appendix A, we prove Theorem~\ref{theorem_rate} in this appendix.
By substituting \eqref{eq:SINR1} into \eqref{eq:rate1}, we have
\begin{align}
%\label{eq:Jensen}
\nonumber
\overline{R}_k
&\overset{(a)}{=}\frac{1}{2}\mathbb{E}_{x,y}\left\{\log_2\left(1 + \frac{P_u}{\sigma_r^2 x + \sigma_d^2 y}\right)\right\}\\\nonumber
&\overset{(b)}\geq \frac{1}{2}\log_2\left(1 + \frac{P_u}{\sigma_r^2 \mathbb{E}\{x\} + \sigma_d^2 \mathbb{E}\{y\}}\right)\\\nonumber
&\overset{(c)}{\rightarrow} \frac{1}{2}\log_2\left(1 + \frac{\pi \eta \xi_k N_d P_r P_u}{\sigma_r^2 \eta P_r (\pi \delta \!+\! 4) \!+\! 4\sigma_d^2 (P_u \sum_{i=1}^{K}\xi_i \!+\! \sigma_r^2)}\right),
%&\triangleq  R_k,
\end{align}
where the RVs $x$ and $y$ in $(a)$ are defined as
\begin{align}
x \!\triangleq\! [\mathbf{W}_d\mathbf{W}_a\mathbf{G}\mathbf{G}^H\mathbf{W}_a^H\mathbf{W}_d^H]_{kk},
~y \!\triangleq\! \frac{[\mathbf{W}_d\mathbf{W}_a\mathbf{W}_a^H\mathbf{W}_d^H]_{kk}}{\alpha^2},
\end{align}
$(b)$ uses Jensen's inequality, and $(c)$ applies Lemmas \ref{lemma3}-\ref{lemma4} followed by the Continuous Mapping Theorem \cite{convergence}.
The lower bound in \eqref{eq:R_k} is finally obtained.
\fi

\section{Useful Results for the Proof of Theorem~\ref{theorem_rate_Q}}

For quantized phase information, some similar results as in Lemma~\ref{lemma3}-\ref{lemma4} can be obtained for the proof of Theorem~\ref{theorem_rate_Q}, after analogous derivations for perfect phase information.
For notational brevity, a brief description for the proofs of these useful results are given in this appendix, where we use $\hat{\mathbf{W}}_a$ and $\hat{\mathbf{W}}_d$ instead of $\mathbf{W}_a$ and $\mathbf{W}_d$ respectively.

Firstly, a similar conclusion can be obtained as in Theorem~\ref{mytheorem1}.
The major difference is the derivation of $\mathbb{E}\left\{\mathbf{g}_i^H\mathbf{g}_j e^{j(\phi_1-\phi_2)}\right\}$ for $i\neq j$, which originates from a different result from \eqref{eq:gi1}. In the case of quantized phase information, \eqref{eq:gi1} changes to
\begin{align}
\mathbb{E}_{g_{i1},\epsilon_{i1}}\left\{g_{i1} e^{-j(\varphi_{i1}+\epsilon_{i1})}\right\}
=\mathbb{E}\left\{|g_{i1}|\right\}\mathbb{E}\left\{e^{-j\epsilon_{i1}}\right\} \overset{(a)}{=}\frac{\sqrt{\pi\eta}}{2}{\rm sinc}(\frac{\pi}{2^{b}}),
\label{eq:Q}
\end{align}
where $\epsilon_{i1}$ is the quantized error of $\varphi_{i1}$ and $(a)$ uses the Rayleigh distribution of $|g_{i1}|$ and the uniform distribution of $\epsilon_{i1}$, i.e., $\epsilon_{i1}\sim U[-\frac{\pi}{2^{b}},\frac{\pi}{2^{b}})$.
By replacing \eqref{eq:Q} with \eqref{eq:gi1} in the proof of Theorem~\ref{mytheorem1}, a similar result is obtained as follows
\begin{equation}\label{eq:E gigj_Q}
\mathbb{E}\left\{\mathbf{g}_i^H\mathbf{g}_j e^{j(\hat{\phi}_1-\hat{\phi}_2)}\right\} = \begin{cases}
\frac{\pi\eta}{4}{\rm sinc}^2(\frac{\pi}{2^{b}})  &i\neq j\\
\eta N_r           &i=j,
\end{cases}
\end{equation}
where $\hat{\phi}_1$ and $\hat{\phi}_2$ are quantized phases.

Then, using \eqref{eq:E gigj_Q} and after analogous proofs, similar results as in Propositions~\ref{mypro1}-\ref{mypro2} and Lemma~\ref{lemma2} are obtained as
\begin{equation}\label{eq:appro_fggf_Q}
\mathbb{E}\left\{\left[\hat{\mathbf{W}}_a\mathbf{G}\mathbf{G}^H\hat{\mathbf{W}}_a^H\right]_{kk} \right\} \!=\! \eta \left(\frac{\pi(N_d\!-\!1)}{4}{\rm sinc}^2\Big(\frac{\pi}{2^b}\Big)\!+\! N_r\right),
\end{equation}
\begin{equation}\label{eq:mypro2_Q}
\hat{\mathbf{W}}_a\hat{\mathbf{W}}_a^H\xrightarrow{a.s.}\mathbf{I}_K,
\end{equation}
\begin{align}
\sqrt{N_rN_d}\hat{\mathbf{W}}_d \xrightarrow{a.s.}
 \frac{1}{{\rm sinc}\big(\frac{\pi}{2^b}\big)}{\rm diag}\left(\frac{2}{\sqrt{\pi \eta \xi_1}},\cdots, \frac{2}{\sqrt{\pi \eta \xi_k}},\cdots, \frac{2}{\sqrt{\pi \eta \xi_K}}\right).
\label{eq:lemma1_Q}
\end{align}

Finally, similar results as in Lemmas~\ref{lemma3}-\ref{lemma4} can be further obtained by using the updated conclusions in \eqref{eq:appro_fggf_Q}-\eqref{eq:lemma1_Q}.
Specifically, for $N_d, N_r\rightarrow \infty$, we have
\begin{align}
N_d\mathbb{E}\left\{\left[\hat{\mathbf{W}}_d \hat{\mathbf{W}}_a \mathbf{G} \mathbf{G}^H \hat{\mathbf{W}}_a^H \hat{\mathbf{W}}_d^H\right]_{kk}\right\}
\xrightarrow{a.s.} \frac{\pi N_d {\rm sinc}^2\big(\frac{\pi}{2^b}\big) + 4N_r}{\pi \xi_k N_r {\rm sinc}^2\big(\frac{\pi}{2^b}\big)},
\label{eq:lemma2_Q}
\end{align}
\begin{equation}
N_d\mathbb{E}\!\left\{\!\!\frac{\left[\!\hat{\mathbf{W}}_d \hat{\mathbf{W}}_a \hat{\mathbf{W}}_a^H \hat{\mathbf{W}}_d^H\!\right]_{kk}}{\alpha^2}\!\!\right\} \!\!\xrightarrow{a.s.} \!\!\frac{4P_u\! \sum_{i=1}^{K}\xi_i \!+\! 4\sigma_r^2}{\pi \eta \xi_k P_r{\rm sinc}^2\big(\frac{\pi}{2^b}\big)}\!.
\label{eq:lemma3_Q}
\end{equation}


\begin{thebibliography}{1}
\footnotesize
%\bibliographystyle{IEEEtran}
%\bibliography{Hybrid}
\bibitem{wang2017on}
Y. Wang, J. Xu, W. Xu, and H. Shen, ``On uplink performance of massive MIMO relaying with hybrid multiuser detection,'' in \emph{Proc. WCSP}, Nanjing, China, Oct. 2017, pp. 1--5.

\bibitem{caire2003on}
G. Caire and S. Shamai, ``On the achievable throughput of a multiantenna Gaussion broadcast channel,'' \emph{IEEE Trans. Inf. Theory}, vol. 49, no. 7, pp.~1691--1706, Jul. 2003.

\bibitem{J_Xu_2} J. Xu, W. Xu, F. Gong, H. Zhang, and X. You, ``Optimal multiuser loading in quantized massive MIMO under spatially correlated channels,'' \emph{IEEE Trans. Veh. Technol.}, vol.~68, no.~2, pp.~1459--1471, Feb. 2019.


\bibitem{ZZhang2015MIMO} Z. Zhang, H. Wu, H. Zhang, H. Dai, and N. Kato, ``Virtual MIMO boosted information propagation on highways,'' \emph{IEEE Trans. Wireless Commun.}, vol. 15, no. 2, pp. 1420--1431, Oct. 2015.


\bibitem{QYu2019MUMIMO} Q. Yu, C. Han, L. Bai, J. Wang, J. Choi, and X. Shen, ``Multiuser selection criteria for MIMO-NOMA systems with different detectors,'' \emph{IEEE Trans. Veh. Technol.}, accepted to appear, Dec. 2019.

\bibitem{ZShi2019AI} Z. Shi, W. Gao, S. Zhang, J. Liu, and N. Kato, ``AI-enhanced cooperative spectrum sensing for non-orthogonal multiple access,'' \emph{IEEE Wireless Commun. Mag.}, accepted to appear, Nov. 2019.

\bibitem{marzetta2010noncooperative}
T. L. Marzetta, ``Noncooperative cellular wireless with unlimited numbers of base station antennas,'' \emph{IEEE Trans. Wireless Commun.}, vol. 9, no. 11, pp. 3590--3600, Nov. 2010.

\bibitem{J_Xu_0} J. Xu, W. Xu, and F. Gong, ``On performance of quantized transceiver in multiuser massive MIMO downlinks,'' \emph{IEEE Wireless Commun. Lett.,} vol.~6, no. 5, pp. 562--565, Jun. 2017.

\bibitem{xie2016a}
H. Xie, F. Gao, S. Zhang, and S. Jin, ``A unified transmission strategy for TDD/FDD massive MIMO systems with spatial basis expansion model,'' \emph{IEEE Trans. Veh. Technol.}, vol. 66, no. 4, pp. 3170--3184, Jul. 2016.



%\bibitem{lu2018MIMO}
%C. Lu, W. Xu, H. Shen, J. Zhu, and K. Wang, ``MIMO channel information feedback using deep recurrent network,'' \emph{IEEE Commun. Lett.}, vol.~23, no.~1, pp.188--191, Jan. 2019.



%\bibitem{xie2016an} H. Xie, F. Gao, and S. Jin, ``An overview of low-rank channel estimation for massive MIMO systems,'' \emph{IEEE Access}, vol.~4, pp. 7313--7321, Nov. 2016.

\bibitem{J_Xu_1}
J. Xu, W. Xu, H. Zhang, G. Y. Li, and X. You, ``Performance analysis of multi-cell millimeter wave massive MIMO networks with low-precision ADCs,'' \emph{IEEE Trans. Commun.}, vol.~67, no.~1, pp.~302--317, Jan. 2019.

\bibitem{ngo2013energy}
H. Q. Ngo, E. Larsson, and T. Marzetta, ``Energy and spectral efficiency of very large multiuser MIMO systems,'' \emph{IEEE Trans. Commun.}, vol. 61, no. 4, pp. 1436--1449, Apr. 2013.






%\bibitem{yang2009relay}
%Y. Yang, H. Hu, J. Xu, and G. Mao, ``Relay technologies for WiMAX and LTE-advanced mobile systems,'' \emph{IEEE Commun. Mag.}, vol. 47, no.~10, pp. 100--105, Oct. 2009.

%\bibitem{xu2010MIMO}
%W. Xu, X. Dong, and W. Lu, ``MIMO relaying broadcast channels with linear precoding and quantized channel state information feedback,'' \emph{IEEE Trans. Signal Process.}, vol. 58, no. 10, pp. 5233--5245, Oct. 2010.

\bibitem{jin2010ergodic}
S. Jin, M. R. Mckay, C. Zhong, and K.-K. Wong, ``Ergodic capacity analysis of amplify-and-forward MIMO dual-hop systems,'' \emph{IEEE Trans. Inform. Theory}, vol. 56, no. 5, pp. 2204--2224, May 2010.

\bibitem{tian2014degree} Y. Tian and A. Yener, ``Degrees of freedom for the MIMO multi-way relay channel,'' \emph{IEEE Trans. Inform. Theory}, vol.~60, no. 5, pp. 2495--2511, May 2014.

\bibitem{gao2009optical}
F. Gao, R. Zhang, and Y.-C. Liang, ``Optimal channel estimation and training design for two-way relay networks,'' \emph{IEEE Trans. Commun.}, vol. 57, no. 10, pp. 3024--3033, Oct. 2009.

%\bibitem{gastpar2005on}
%M. Gastpar and M. Vetterli, ``On the capacity of large Gaussion relay networks,'' \emph{IEEE Trans. Inf. Theory}, vol. 51, no. 3, pp. 765--779, Mar. 2005.

\bibitem{ngo2014multipair}
H. Q. Ngo, H. Suraweera, M. Matthaiou, and E. G. Larsson, ``Multipair full-duplex relaying with massive arrays and linear processing,'' \emph{IEEE J. Sel. Areas Commun.}, vol. 32, no. 9, pp. 1721--1737, Sep. 2014.

%\bibitem{amarasuriya2015multi-user}
%G. Amarasuriya and H. V. Poor, ``Multi-user relay networks with massive MIMO,'' in \emph{Proc. IEEE ICC}, Jun. 2015, pp.~2017--2023.



\bibitem{Suraweera2013multi}
H. A. Suraweera, H. Q. Ngo, T. Q. Duong, C. Yuen, and E. G. Larsson, ``Multi-pair amplify-and-forward relaying with very large antenna arrays,'' in \emph{Proc. IEEE ICC}, Jun. 2013, pp. 4635--4640.

\bibitem{jin2015ergodic}
S. Jin, X. Liang, K.-K. Wong, X. Gao, and Q. Zhu, ``Ergodic rate
analysis for multi-pair massive MIMO two-way relay networks,'' \emph{IEEE
Trans. Wireless Commun.}, vol. 14, no. 3, pp. 1480--1491, Mar. 2015.


\bibitem{ZFadlullahg2019Multihop}
Z. Fadlullah, Y. Kawamoto, H. Nishiyama, N. Kato, N. Egashira, K. Yano, and T. Kumagai, ``Multi-hop wireless transmission in multi-band WLAN Systems: Proposal and future perspective,'' \emph{IEEE Wireless Commun. Mag.}, vol. 26, no. 1, pp. 108--113, Feb. 2019.

%%
\bibitem{jose2011pilot}
J. Jose, A. Ashikhmin, T. L. Marzetta, and S. Vishwanath, ``Pilot contamination and precoding in multi-cell TDD systems,'' \emph{IEEE Trans. Wireless Commun.}, vol. 10, no. 8, pp. 2640--2651, Jun. 2011.

\bibitem{jindal2005dirty} N. Jindal and A. Goldsmith, ``Dirty-paper coding versus TDMA for MIMO broadcast channels,'' \emph{IEEE Trans. Inform. Theory}, vol. 51, no.~5, pp. 1783--1794, May 2005.

\bibitem{rusek2013scaling} F. Rusek, D. Persson, B. K. Lau, E. G. Larsson, T. L. Marzetta, O. Edfors, and F. Tufvesson, ``Scaling up MIMO: Opportunities and challenges with very large arrays,'' \emph{IEEE Signal Process. Mag.}, vol. 30, no. 1, pp. 40--60, Jan. 2013.
%%


\bibitem{alkhateeb2014channel}
A. Alkhateeb, O. El Ayach, G. Leus, and R. W. Heath, ``Channel estimation and hybrid precoding for millimeter wave cellular systems,'' \emph{IEEE J. Sel. Topics Signal Process.}, vol. 8, no. 5, pp. 831--846, Jul. 2014.



%\bibitem{ayach2014spatially} O. El Ayach, S. Rajagopal, S. Abu-Surra, Z. Pi, and R. W. Heath, ``Spatially sparse precoding in millimeter wave MIMO systems,'' \emph{IEEE Trans. Wireless Commun.}, vol. 13, no. 3, pp. 1499--1513, Jan. 2014.



\bibitem{yu2016hybrid}
X. Yu, J.-C. Shen, J. Zhang, and K. B. Letaief, ``Alternating minimization algorithms for hybrid precoding in millimeter wave MIMO systems,'' \emph{IEEE J. Sel. Topics Signal Process.}, vol.~10, no.~3, pp.~485--500, Apr. 2016.



\bibitem{liang2014low-complexity}
L. Liang, W. Xu, and X. Dong, ``Low-complexity hybrid precoding in massive multiuser MIMO systems,'' \emph{IEEE Wireless Commun. Lett.}, vol.~3, no. 6, pp. 653--656, Dec. 2014.


\bibitem{dai2015near}
L. Dai, X. Gao, J. Quan, S. Han, and C. I, ``Near-optimal hybrid analog and digital precoding for downlink mmWave massive MIMO systems,'' in \emph{Proc. IEEE ICC}, Jun. 2015, pp. 1334--1339.


\bibitem{gao2016energy}
X. Gao, L. Dai, S. Han, C. I, and R. W. Heath, ``Energy-efficient hybrid analog and digital precoding for mmWave MIMO systems with large antenna arrays,'' \emph{IEEE J. Sel. Areas Commun.}, vol. 34, no. 4, pp. 998--1009, Apr. 2016.


\bibitem{HHuang2019learning} H. Huang, Y. Song, J. Yang, G. Gui, and F. Adachi, ``Deep-learning-based millimeter-wave massive MIMO for hybrid precoding,'' \emph{IEEE Trans. Veh. Technol.}, vol. 68, no. 3, pp. 3027--3032, Mar. 2019.

\bibitem{fozooni2016massive}
M. Fozooni, M. Matthaiou, S. Jin, and G. C. Alexandropoulos, ``Massive MIMO relaying with hybrid processing,'' in \emph{Proc. IEEE ICC}, May 2016, pp. 1--6.

\bibitem{xu2017spectral}
W. Xu, J. Liu, S. Jin, and X. Dong, ``Spectral and energy efficiency of multi-pair massive MIMO relay network with hybrid processing,'' \emph{IEEE Trans. Commun.}, vol. 65, no. 9, pp. 3794--3809, Jun. 2017.

\bibitem{borade2007Amplify}
S. P. Borade, L. Zheng, and R. G. Gallager, ``Amplify-and-forward in wireless relay networks: Rate, diversity, and network size,'' \emph{IEEE Trans. Inf. Theory}, vol. 53, no. 10, pp. 3302--3318, Oct. 2007.

%\bibitem{Yeh2007Asymptotic}
%S. Yeh and O. L\'ev$\rm{\hat{e}}$que, ``Asymptotic capacity of multi-level amplify-and-forward relay networks,'' in \emph{Proc. IEEE Int. Symp. Inf. Theory (ISIT)}, Nice, France, Jun. 2007, pp. 1436--1440.

%\bibitem{Wagner2007On}
%J. Wagner, B. Rankov, and A. Wittneben, ``On the asymptotic capacity of the Rayleigh fading amplify-and-forward MIMO relay channel,'' in \emph{Proc. IEEE Int. Symp. Inf. Theory (ISIT)}, Nice, France, Jun. 2007, pp. 2711--2715.

\bibitem{Ding2018Impact}
F. Ding, H. Wang, S. Zhang, and M. Dai, ``Impact of residual hardware impairments on non-orthogonal multiple access based amplify-and-forward relaying networks,'' \emph{IEEE Access}, vol. 6, pp. 15117--15131, Mar. 2018.

\bibitem{Asiedu2018Optimal}
D. K. P. Asiedu, S. Mahama, S.-W. Jeon, and K.-J. Lee, ``Optimal power splitting for simultaneous wireless information and power transfer in amplify-and-forward multiple-relay systems,'' \emph{IEEE Access}, vol. 6, pp. 3459--3468, Jan. 2018.

\bibitem{Park2017Amplify}
J. Park, C.-B. Chae, and G. Yoon, ``Amplify-and-forward two-way relaying system
over free-space optics channels,'' \emph{IEEE J. Commun. Netw.}, vol. 19, no. 5, pp. 481--492, Oct. 2017.

\bibitem{Sohrabi2016hybrid}
F. Sohrabi and W. Yu, ``Hybrid digital and analog beamforming design for large-scale antenna arrays,'' \emph{IEEE J. Sel. Topics Signal Process.}, vol. 10, no. 3, pp. 501--513, Apr. 2016.


\bibitem{Gallager2013} R. G. Gallager, \emph{Stochastic Process: Theory and Applications}, Cambridge, U.K.: Cambridge Univ. Press, 2013.

\bibitem{Donoughue2012on}
N. O'Donoughue and J. M. F. Moura, ``On the product of independent complex Gaussians'' \emph{IEEE Trans. Signal Process.}, vol. 60, no. 3, pp.~1050--1063, Mar. 2012.

\bibitem{convergence}
P. Billingsley, \emph{Convergence of Probability Measures}. John Wiley \& Sons, 1969.


\bibitem{muirhead1982aspects}
R. J. Muirhead, \emph{Aspects of Multivariate Statistical Theory}. New York: J. Wiley, S. Inc, 1982.

\bibitem{hahn1981the}
M. G. Hahn and M. J. Klass, ``The multidimensional central limit theorem for arrays normed by affine transformations,'' \emph{The Annals of Probability}, vol. 9, no. 4, pp. 611--623, Aug. 1981.



\bibitem{LPan2018}
L. Pan, L. Liang, W. Xu, and X. Dong, ``Framework of channel estimation for hybrid analog-and-digital processing enabled massive MIMO communications,'' \emph{IEEE Trans. Commun.}, vol. 66, no. 9, pp. 3902--3915, Sept. 2018.

\bibitem{SNoh2016}
S. Noh, M. D. Zoltowski, and D. J. Love, ``Training sequence design for feedback assisted hybrid beamforming in massive MIMO systems,'' \emph{IEEE Trans. Commun.}, vol. 64, no. 1, pp. 187--200, Jan. 2016.


\bibitem{Bogale2015}
T. E. Bogale, L. B. Le, and X. Wang, ``Hybrid analog-digital channel estimation and beamforming: Training-throughput tradeoff,'' \emph{IEEE Trans. Commun.}, vol. 63, no. 12, pp. 5235--5249, Dec. 2015.


\bibitem{YDing2018}
Y. Ding, S.-E. Chiu, and B. D. Rao, ``Bayesian channel estimation algorithms for massive MIMO systems with hybrid analog-digital processing and low-resolution ADCs,'' \emph{IEEE J. Sel. Topics Signal Process.}, vol. 12, no. 3, pp. 499--513, Jun. 2018.



\end{thebibliography}
\end{document}